\author{Jianfeng Deng}
\affiliation{%
  \institution{Guangxi University}
  \city{Nanning}
  \country{China}
}
\email{jianfeng_web@163.com}
\author{Qingfeng Chen}
\affiliation{%
  \institution{Guangxi University}
  \city{Nanning}
  \country{China}
}
\email{qingfeng@gxu.edu.cn}
\author{Debo Cheng}
\affiliation{%
  \institution{University of South Australia}
  \city{Adelaide}
  \country{Australia}
}
\email{debo.cheng@unisa.edu.au}
\author{Jiuyong Li}
\affiliation{%
  \institution{University of South Australia}
  \city{Adelaide}
  \country{Australia}
}
\email{jiuyong.li@unisa.edu.au}
\author{Lin Liu}
\affiliation{%
  \institution{University of South Australia}
  \city{Adelaide}
  \country{Australia}
}
\email{liu.lin@unisa.edu.au}
\author{Xiaojing Du}
\affiliation{%
  \institution{University of South Australia}
  \city{Adelaide}
  \country{Australia}
}
\email{xiaojing.du@mymail.unisa.edu.au}
\renewcommand\footnotetextcopyrightpermission[1]{} 
\DeclareMathAlphabet{\mathcal}{OMS}{cmsy}{m}{n}  
\newtheorem{assumption}{Assumption}		%
\newtheorem{theorem}{Theorem}		%
\newtheorem{lemma}{Lemma}		%
\newtheorem{definition}{Definition}
\crefname{equation}{equation}{equations}
\newcommand{\indep}{\perp\!\!\!\perp}
\newcommand{\nindep}{\not\!\perp\!\!\!\perp}
\begin{document}

\title{Mitigating Dual Latent Confounding Biases in Recommender Systems}

\begin{abstract}
Recommender systems are extensively utilised across various areas to predict user preferences for personalised experiences and enhanced user engagement and satisfaction. Traditional recommender systems, however, are complicated by confounding bias, particularly in the presence of latent confounders that affect both item exposure and user feedback. Existing debiasing methods often fail to capture the complex interactions caused by latent confounders in interaction data, especially when dual latent confounders affect both the user and item sides. To address this, we propose a novel debiasing method that jointly  integrates the \underline{I}nstrumental \underline{V}ariables (IV) approach and  \underline{i}dentifiable Variational Auto-Encoder (iVAE)  for \underline{D}ebiased representation learning in \underline{R}ecommendation systems, referred to as IViDR. Specifically, IViDR leverages the embeddings of user features as IVs to address confounding bias caused by latent confounders between items and user feedback, and reconstructs the embedding of items to obtain debiased interaction data. Moreover, IViDR employs an Identifiable Variational Auto-Encoder (iVAE) to infer identifiable representations of latent confounders between item exposure and user feedback from both the original and debiased interaction data. Additionally, we provide theoretical analyses of the soundness of using IV and the identifiability of the latent representations.
Extensive experiments on both synthetic and real-world datasets demonstrate that IViDR outperforms state-of-the-art models in reducing bias and providing reliable recommendations.
\end{abstract}

\keywords{Recommender systems, Instrumental Variables, Latent Confounders, Debiasing}
 
\maketitle

\section{Introduction}
\label{sec:intro}
Recommender systems are widely employed to offer personalised suggestions across various domains, including video streaming \cite{gao2022kuairand}, e-commerce \cite{36}, and online search \cite{32}. These systems have evolved significantly with the advances in collaborative filtering techniques (e.g., Matrix Factorisation (MF) \cite{12}), deep learning algorithms (e.g., Deep Factorisation Machine (DeepFM) \cite{deepfm}, Deep Cross Network (DCN) \cite{dcn}, Deep Interest Network (DIN) \cite{din}), and graph neural networks (GNNs) (e.g., LightGCN \cite{lightgcn}). Despite these innovations, many traditional recommendation algorithms rely heavily on statistical associations, which can introduce estimation biases such as popularity bias~\cite{popularity1}, selection bias~\cite{select1,15}, and conformity bias~\cite{conformity1}. To address these challenges, some causality-based recommender systems have been developed and successfully deployed in real-world applications~\cite{ROBINS19861393}.

Causal recommender systems address estimation bias by incorporating causal inference techniques. Previous studies have employed classical causal inference methods \cite{20,21}, such as back-door adjustment \cite{20} and inverse propensity reweighting (IPW) \cite{23} to mitigate specific biases. For instance, IPS \cite{23} leverages IPW to address selection bias \cite{15}, while D2Q \cite{34} applies back-door adjustments to counter video duration bias, which typically skews recommendations towards longer videos. Additionally, the Popularity De-biasing Algorithm (PDA) \cite{35} specifically uses causal interventions to eliminate popularity bias and improve recommendation accuracy. However, these approaches generally overlook latent confounders, which can still significantly impact recommendation accuracy and fairness~\cite{20}.

Recent developments in causal recommendation methods have focused on accounting for latent confounders~\cite{idcf,23}, aiming to enhance the robustness and unbiasedness of estimations by modeling and adjusting latent confounders that influence both user behaviour and item exposure. Latent confounders, which are unobservable by nature, introduce substantial estimation bias in recommender systems~\cite{20,cheng2022local,cheng2024data}. For example, high-quality products are usually priced higher and tend to receive more positive ratings from users, leading to a false correlation between high prices and positive ratings. As a result, recommender systems may tend to recommend high-priced products, but high-priced products may be disliked by users. Previous works, such as Deconfounder \cite{29}, approximate latent confounders using historical user data to mitigate their effects. The iDCF~\cite{idcf} method employs proximal causal inference techniques to address the impact of latent confounders. However, these methods may fail to infer latent confounders directly from interaction data, as some latent confounders may lack proxy information within the data. As a result, the inference of latent confounders from such data can potentially be inaccurate.

Instrumental variables (IVs) have been used to decompose input vectors within recommendation models to mitigate the effects of latent confounders~\cite{cheng2024data,iv-4,iv-11,iv-29}, as demonstrated in methods like IV4Rec~\cite{IV-serach-data} and IV4Rec+~\cite{si2023enhancing}. While these approaches can be effective for debiasing, they need to modify the input vectors using representation decomposition, which does not have an identification guarantee for the learned representations. Consequently, different learned representations can yield varying performance outcomes, making it difficult to ascertain whether improvements in accuracy are truly due to the mitigation of latent confounders or simply variations in the input data representations~\cite{9}. Furthermore, the decomposed representations used in IV4Rec~\cite{IV-serach-data} and IV4Rec+~\cite{si2023enhancing} primarily address latent confounders between items and user feedback, while overlooking the latent confounders between item exposure and user feedback.

Recommender systems deal with complex user-item historical data, where latent confounders can arise from different sources. Broadly, latent confounders fall into two categories: those that can be inferred through proxy variables and those without reliable proxy variables. For example, a user's socioeconomic status can be estimated by the average price of products they have purchased, whereas a user's mood when providing ratings lacks a clear proxy variable.
Existing methods typically address only one type of latent confounder.  
For example, iDCF~\cite{idcf} primarily focuses on latent confounders with proxy variables, while methods like IV4Rec~\cite{IV-serach-data} and IV4Rec+~\cite{si2023enhancing} handle latent confounders without proxy variables but may overlook those with proxy variables, such as latent confounders between item exposure and user feedback.
Since IV4Rec~\cite{IV-serach-data} does not have an identification guarantee, simply combining IV4Rec and iDCF to address both types of latent confounders still fails to achieve identifiability on learned representation.
To the best of our knowledge, no practical solution currently exists that addresses both types of latent confounders (i.e., dual latent confounding biases) in recommender systems.

To address this challenge, we propose a novel debiasing method (IViDR) that jointly integrates the \underline{I}nstrumental \underline{V}ariables (IV) approach and \underline{i}dentifiable VAE for \underline{D}ebiased representation learning in \underline{R}ecommendation systems to mitigate dual latent confounding biases. Specifically, IViDR first utilises user feature embeddings as IVs to reconstruct the treatment variables and generate debiased interaction data for addressing the effects of between items and user feedback. Subsequently, IViDR employs an iVAE~\cite{9} to infer identifiable latent representations from a combination of additional observable proxy variables (e.g., user’s consumption level), interaction data, and debiased interaction data. Finally, IViDR adjusts for the inferred latent representations to mitigate confounding bias between item exposure and user feedback, resulting in a debiased recommendation system. In summary, our main contributions are listed as follows:
\begin{itemize}
\item We propose a novel debiasing method, IViDR, for mitigating dual latent confounding biases, i.e., the confounding biases caused by latent confounders between items and user feedback, as well as between item exposure and user feedback.
\item We provide a theoretical analysis of the identification of the learned latent representations in the IViDR, along with a discussion on the validity of using IVs.
\item Extensive experiments on both synthetic and real-world datasets demonstrate that IViDR outperforms state-of-the-art recommendation methods in reducing bias and improving recommendation accuracy.
\end{itemize}

\section{Related Work}
\label{sec:relate} 
In this section, we review methods for recommender systems, which can broadly be categorized into two groups: traditional correlation-based systems and causality-based systems.

\paragraph{Traditional Recommendation Algorithms}
Traditional algorithms include collaborative filtering-based approaches (e.g., Matrix Factorisation (MF) \cite{12}), neural network-based recommendation algorithms (e.g., Deep Factorisation Machine (DeepFM) \cite{deepfm}, Deep Cross Network (DCN) \cite{dcn}, Deep Interest Network (DIN) \cite{din}), and graph neural network-based algorithms (e.g., LightGCN \cite{lightgcn}). However, these methods are susceptible to introducing biases into recommendation outcomes. Instead of delving deeply into traditional algorithms, we direct our attention to causality-based recommender systems. For more comprehensive reviews of traditional approaches, please refer to \cite{r1, r2}.

\paragraph{Causal Recommendation Algorithms for Addressing Specific Biases}
With advancements in causal inference techniques \cite{20,21}, causality-based recommendation algorithms have emerged. Several studies have adopted classical causal inference methods, such as back-door adjustment \cite{20} or IPW \cite{23}, to mitigate specific biases. For instance, IPS \cite{23} uses IPW to address selection bias \cite{15}, while D2Q \cite{34} employs back-door adjustment to reduce video duration bias, which often leads recommender systems to favour longer videos. The Popularity De-biasing Algorithm (PDA) \cite{35} tackles popularity bias to enhance recommendation accuracy. Despite their promise, these algorithms typically overlook the influence of latent confounders, which can undermine the accuracy of classical causal inference methods~\cite{cheng2023causal, cheng2024data}.

To address the challenges posed by latent confounders, recent research has proposed several debiasing methods. The Deconfounder \cite{29} seeks to approximate latent confounders using historical user data to mitigate their effects. Similarly, Hidden Confounder Removal (HCR) \cite{37} employs front-door adjustment to address latent confounders, while iDCF \cite{idcf} utilises proximal causal inference techniques for the same purpose. 
However, these methods may struggle to infer latent confounders directly from interaction data, as some latent confounders may lack proxy variables in data.

Another line of research, exemplified by IV4REC \cite{IV-serach-data}, leverages the pre-defined IVs \cite{iv-4, iv-11, iv-29} to decompose the input vectors of recommendation models and mitigate the effects of latent confounders beween items and user feedback. However, these approaches directly modify the input vectors while overlooking the latent confounders between item exposure and user feedback, leaving more complex types of latent confounding unresolved.

In contrast, our IViDR method address both types of latent confounders in recommender systems by developing a novel IV-based approach combined with an iVAE. Note that our IViDR method effectively handles dual latent confounding biases, with or without proxy variables, resulting in a reliable and unbiased recommender system, as demonstrated in our experiments.

\section{Preliminaries}
In this section, we introduce important notations, definitions, and the concept of IV that are used throughout the paper.


\subsection{Notations}

\begin{table}[t]
  \centering 
  \caption{Definitions and Notations}
  \label{tab:notation}
  \begin{tabular}{c p{5cm}}
    \toprule
    \textbf{Symbol} & \textbf{Definition}\\
    \midrule
    $\mathbf{U}$ & Users. \\
    $\mathbf{I}$ & Items.\\
    $\mathbf{Z}$ & The instrumental variable (the embeddings of user features).\\
    $\mathbf{T}$ & The treatment (the embedding of the target item $i$ and the embeddings of the items interacted with $u$). \\
    $\mathbf{A}$ & Exposure vector.\\
    $\mathbf{R}$ & User feedback. \\
    $\mathbf{W}$ & The set of proxy variables.\\
    $\mathbf{C}$ & Latent confounders with proxy variables. \\
    $\mathbf{B}$ & Latent confounders without proxy variables. \\ 
    $\mathbf{X}$ & The user-item interaction data.\\
    \bottomrule 
  \end{tabular}
\end{table}

The primary notations are summarised in Table~\ref{tab:notation}. 
We represent vectors with bold-faced uppercase (e.g., $\mathbf{X}$), and their elements by lowercase letters with subscripts (e.g., $x_{ij}$).
 
Let $\mathbf{U}$ represent the set of users, where $|\mathbf{U}|=m$ denotes the total number of users. Similarly, let $\mathbf{I}$ represent the set of items, with $|\mathbf{I}|=n$ denoting the total number of items. 
For each user $u \in \mathbf{U}$, the exposure vector $\mathbf{A}$ is defined as $\mathbf{A} = [a_{u1}, a_{u2}, \dots, a_{un}] \in \{0,1\}^n$,  where $a_{ui} = 1/0$ indicates that item $i$ is exposed to user 
$u$, and $a_{ui} = 0$ indicates it is not. The feedback from user $u$ is represented by the vector $\mathbf{R} = [r_{u1}, r_{u2}, \dots, r_{un}]$, where $r_{ui}$ denotes the observed feedback from user $u$ for item $i$.

In this work, we adopt the potential outcomes framework~\cite{imbens2015causal} to develop debiased recommender systems. Let $r_{ui}^a$ denote the potential outcome for user $u$ on item $i$ under the exposure status $a$. Previous studies~\cite{29} have assumed that the exposure of item $i$ to user $u$ is the sole factor influencing this outcome. However, our method accounts for additional covariates or latent variables that may affect the  outcomes $r_{ui}^a$. The set of observed proxy variables for user $u$ is represented by $\mathbf{W}$. The set of user features is indicated by $\mathbf{Z}$. The recommendation dataset $\mathcal{D}$ consists of the interaction data $\mathbf{X}$, the set of user features $\mathbf{Z}$, and the set of proxy variables $\mathbf{W}$. The probability that user $u$ will provide positive feedback on item $i$, given the exposure vector $\mathbf{A}$, is denoted by $p(r_{ui} = 1 \mid \mathbf{A})$.  In recommender systems, the set of treatment variables $\mathbf{T}$ is defined as a set of embeddings, which includes the embedding of the target item $i$ and the embeddings of the items previously interacted with by user $u$, to predict the preference score.

Moreover, we use a graphical causal model~\cite{20, cheng2024data}, specifically a directed acyclic graph (DAG), to represent the causal relationships between variables in recommender systems. Our proposed causal DAG $\mathcal{G}$ is illustrated in Figure~\ref{fig:2}. In the DAG, $\mathcal{G} = (\mathbf{V}, \mathbf{E})$,   $\mathbf{V} = \mathbf{Z} \cup \mathbf{W} \cup \mathbf{C} \cup \mathbf{B} \cup \{\mathbf{T}, \mathbf{A}, \mathbf{R}\}$ is the set of vertices  and $\mathbf{E}$ represents the set of directed edges. Here, $\mathbf{B}$ denotes latent confounders without proxy variables, while $\mathbf{C}$ represents latent confounders with corresponding proxy variables.

\begin{figure}[t]
    \centering
    \includegraphics[width=0.815\linewidth]{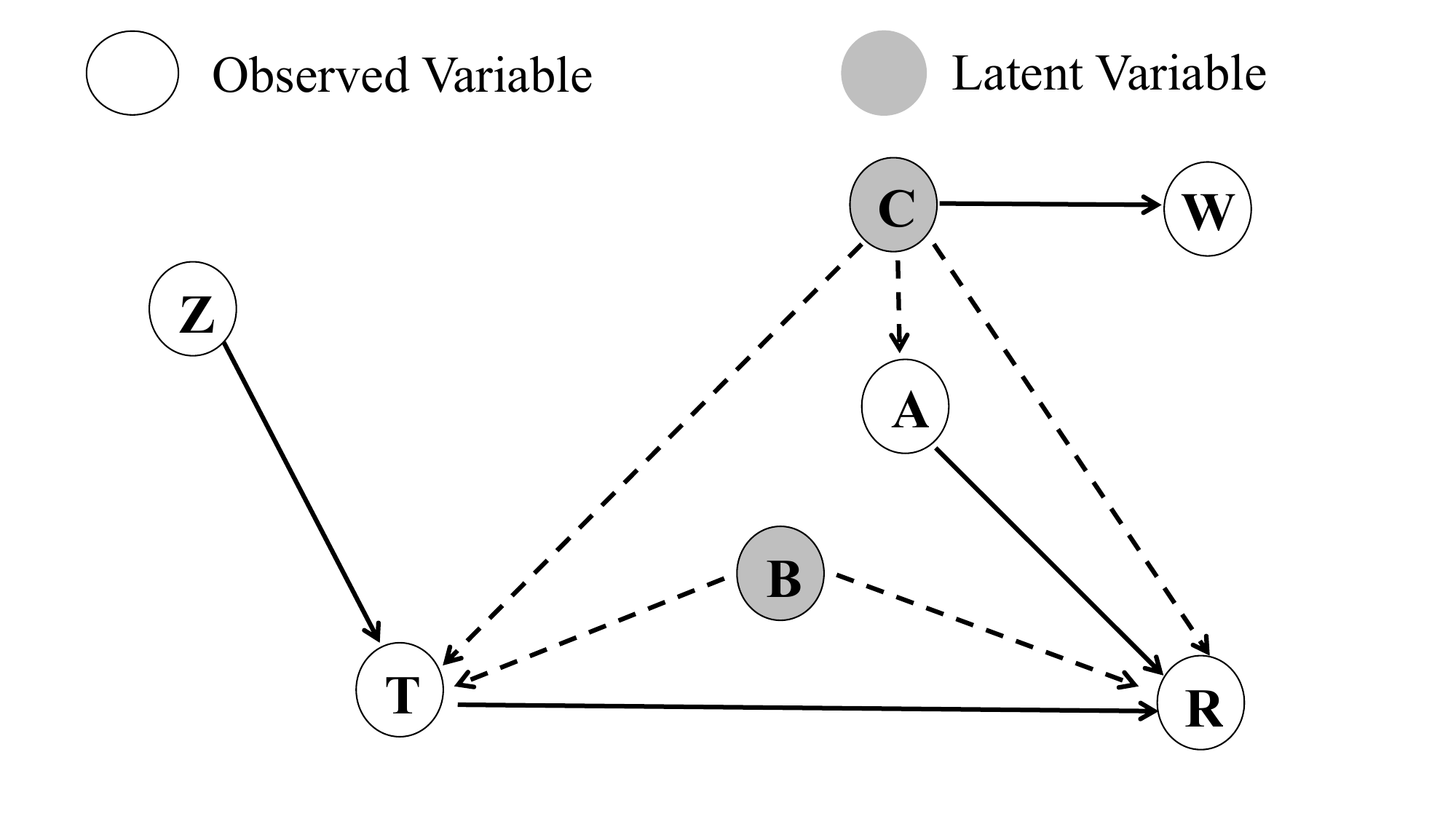}
    \caption{ An example DAG illustrating that $\mathbf{Z}$ serves as an IV. $\mathbf{T}$ and $\mathbf{R}$ represent the treatment and outcome variables, respectively, while $\mathbf{B}$ denotes latent confounders without proxy variables. $\mathbf{C}$ denotes latent confounders with proxy variables, $\mathbf{A}$ indicates exposure status, and $\mathbf{W}$ represents the set of proxy variables.}
    \label{fig:2}
\end{figure}


\subsection{Instrumental Variable (IV) Approach}
The IV approach \cite{iv-4,iv-11,iv-29} is a useful tool for addressing the confounding bias caused by latent confounders. 
The set of variable $\mathbf{Z}$ is a set of valid IVs relative to pair $\mathbf{T}$ and $\mathbf{R}$ when it satisfies the following three conditions:
\begin{definition}[Instrumental Variable (IV)~\cite{20}]
\label{Definition 1}
Given a  DAG $\mathcal{G} = (\mathbf{V}, \mathbf{E})$, where the set of vertices is defined as $\mathbf{V} = \mathbf{Z} \cup \mathbf{W} \cup \mathbf{C} \cup \mathbf{B} \cup \{\mathbf{T}, \mathbf{A}, \mathbf{R}\}$ and the set of directed edges as $\mathbf{E}$, then $\mathbf{Z}$ is a set of valid IVs if the following conditions hold:
\begin{itemize}
    \item[(i)] $\mathbf{Z} \not \nindep_d \mathbf{T}$, meaning there is at least one unblocked path between $\mathbf{Z}$ and $\mathbf{T}$ in $\mathcal{G}$. 
    \item[(ii)] $\mathbf{Z}\indep_d \mathbf{R} \mid \{\mathbf{T}, \mathbf{B}\}$  in  $\mathcal{G}_{\underline{\mathbf{T}}}$ (obtained by removing the edge $\mathbf{T} \rightarrow \mathbf{R}$ from $\mathcal{G}$).
    \item[(iii)] $\mathbf{Z}$ does not share confounders with $\mathbf{R}$, i.e., $(\mathbf{Z} \indep_{d} \mathbf{R})_{\mathcal{G}_{\underline{\mathbf{T}}}}$.
\end{itemize}
\end{definition}

\noindent where $\nindep_d$ and $\indep_d$ refer to d-connection and d-separation, respectively~\cite{20}. Once a variable satisfies the three conditions, it can serve as a valid IV. Existing IV-based methods can be employed to mitigate the confounding bias caused by latent confounders in causal effect estimation. Most IV-based methods utilize the two-stage least squares (2SLS) procedure~\cite{iv-17}. Recently, several IV-based causal learning approaches have extended the 2SLS by incorporating deep learning techniques. For example, DeepIV~\cite{iv-11} provides a flexible framework that combines deep learning with the 2SLS method. DFIV~\cite{iv-36} introduces an alternating training strategy for 2SLS, demonstrating strong performance on high-dimensional data.

Our problem setting is illustrated in the causal DAG shown in Figure \ref{fig:2}, where the set of latent confounders $\mathbf{B}$ exist between $\mathbf{T}$ and  $\mathbf{R}$. In this setting, $\mathbf{Z}$ (the embeddings of user features, such as observed gender) serves as a valid IV. Consequently, we employ IV-based methods~\cite{iv-1,iv-5} to estimate the causal effect of $\mathbf{T}$ on $\mathbf{R}$. Specifically, we use the 2SLS method to reconstruct $\mathbf{T}$ and mitigate the effects of latent confounders $\mathbf{B}$. 

\section{The Proposed IV\lowercase{i}DR Method}
In this section, we provide the details of our proposed IViDR method. First, we present an overview of the IViDR method. Next, we provide a theoretical analysis of the correctness of IVs in the IViDR method. We then introduce the 2SLS method in IViDR to reconstruct treatments and obtain debiased interaction data for $\mathbf{T}$ and $\mathbf{R}$. After that, we describe how latent confounders between item exposure $\mathbf{A}$ and $\mathbf{R}$ are learned using an iVAE within the IViDR method. Additionally, we prove the identifiability of the learned representations. Finally, we present the objective function of the IViDR method.

\begin{figure*}[t]
    \centering{}
    \vspace{-35pt}
    \includegraphics[width=6.15in]{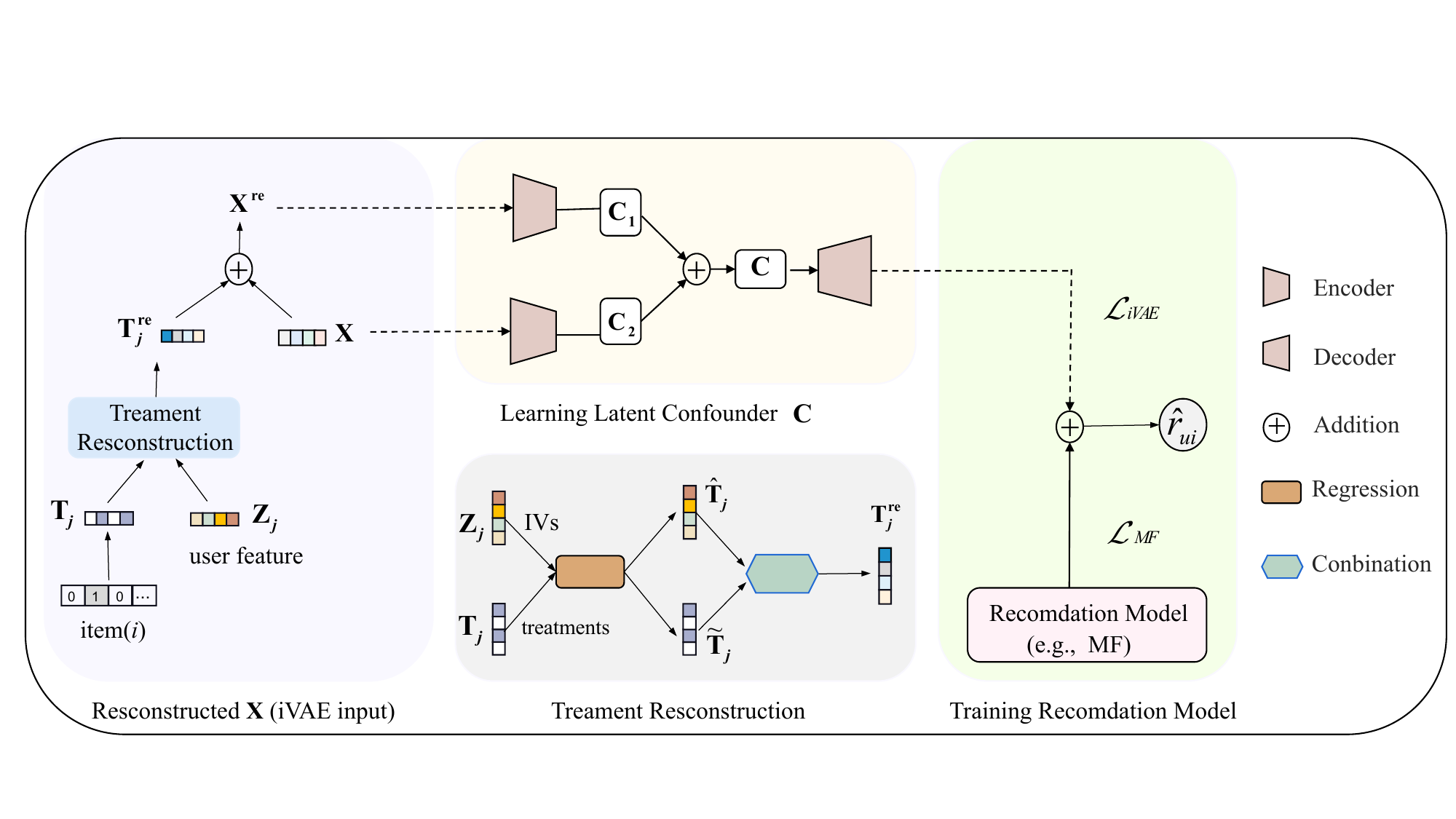} \vspace{-20pt}
    \caption{The architecture of our proposed IViDR method. IViDR uses the embeddings of user features as IVs to decompose treatments into fitted (the values fitted by the regression model) and residual (the residuals) components. It then reconstructs the treatments and combines them with interactional data to generate debiased interactional data. An iVAE is employed to infer latent representations from proxy variables, interactional data, and debiased interactional data. Finally, IViDR adjusts for these latent representations to mitigate confounding biases.
    }
    \label{fig:3}
\vspace{-10pt}
\end{figure*}

\subsection{An Overview for IViDR Method}
The main steps of the IViDR method are illustrated in Figure \ref{fig:3}. In IViDR, we use the embeddings of user features, $\mathbf{Z}$, as instrumental variables (IVs). The details of the IViDR are as follows: First, IViDR uses IVs to decompose the treatments into fitted (which explain why a user prefers an item) and residual components (which capture other factors influencing the user-item interaction). With the aid of IVs, we reconstruct $\mathbf{T}$ to obtain the reconstruct treatment $\mathbf{T}^{\mathrm{re}}$, ensuring that $\mathbf{T}^{\mathrm{re}}$ is not affected by the latent confounders $\mathbf{B}$. The reconstruct treatment $\mathbf{T}^{\mathrm{re}}$ is then combined with the interaction data to generate the debiased interaction data.
Next, IViDR employs the iVAE to efficiently learn identifiable latent representations from a combination of proxy variables, interaction data, and debiased interaction data, mitigating the effects of latent confounders $\mathbf{C}$. Finally, IViDR adjusts for the learned representations to further reduce confounding biases. Additionally, we provide a theoretical analysis of the identifiability of learning the representation $\mathbf{C}$ through our IViDR method. Due to space limitations, the pseudo-code of IViDR is provided in Appendix \ref{appendix:A}.

\subsection{The Soundness of IV in IViDR Method}
In recommender systems, confounding bias typically arises from specific interactions when users access the service. However, since the embeddings of user features represent all users, it is reasonable to conclude that there is no confounding bias between $\mathbf{Z}$ and $\mathbf{R}$. Thus, we utilise the embeddings of user features as a valid IV. Furthermore, we provide a theoretical analysis to demonstrate the soundness of using the embeddings of user features, $\mathbf{Z}$, as a valid IV for mitigating the confounding bias caused by latent confounders:

\begin{theorem}
\label{Theorem 1} 
Given a causal DAG $\mathcal{G} = (\mathbf{Z} \cup \mathbf{W} \cup \mathbf{C} \cup \mathbf{B} \cup \{\mathbf{T}, \mathbf{A}, \mathbf{R}\}, \mathbf{E})$, where $\mathbf{T}$ and $\mathbf{R}$ are the treatment and outcome, respectively, and $\mathbf{E}$ is the set of edges between the variables. Let $\mathbf{B}$ denote the latent confounders between $\mathbf{T}$ and $\mathbf{R}$. Suppose there exists a directed edge $\mathbf{T} \rightarrow \mathbf{R}$ in $\mathbf{E}$, and  $\mathbf{Z}$ represents the embeddings of user features. Additionally, assume that the underlying data generating process is faithful to the proposed causal DAG $\mathcal{G}$ as shown in Figure \ref{fig:2}. Therefore, $\mathbf{Z}$ serves as the valid IV for estimating the causal effect of $\mathbf{T}$ on $\mathbf{R}$.
\end{theorem}

The proof can be found in Appendix~\ref{proof}. Theorem \ref{Theorem 1} allows us to use the embeddings of user features, $\mathbf{Z}$, as valid IVs to decompose the treatment variable, $\mathbf{T}$, for obtaining the reconstructed treatment.

\subsection{Interaction Data Reconstruction}
\label{TRUIV}

We introduce the treatment reconstruction component using the IV (i.e., the embeddings of user features $\mathbf{Z}$) in our IViDR method.

\subsubsection{Construction of treatments and IVs.}

The treatment variable $\mathbf{T}$ in our problem setting is defined as: 

\begin{equation}
\mathbf{T} = \left\{\mathbf{T}_j : j \in \mathbf{I}_u \cup \{i\}\right\}
\end{equation}
where $\mathbf{T}_j \in \mathbb{R}^{d_i}$ is the embedding vector of item $j$, and $\mathbf{I}_u$ represents the set of items interacted with by user $u$ in the recommendation dataset $\mathcal{D}$. $\mathbf{I}_u$ is defined as:

\begin{equation}
\mathbf{I}_u=\left\{i^{\prime}: \exists\left(u, i^{\prime}, r_{ui}\right) \in \mathcal{D}\right\}
\end{equation}

We define a set of matrices $\mathbf{Z}_j$, where each matrix $\mathbf{Z}_j$ represents the embeddings of user features related to item $j$ as follows: 

\begin{equation}
\mathbf{Z}=\left\{\mathbf{Z}_j: j \in \mathbf{I}_u \cup\{i\}\right\}
\end{equation}

Hence, we have the treatment variable $\mathbf{T}$ and the embeddings of user features, $\mathbf{Z}$, for building the debiased recommendation system.

\subsubsection{Treatment reconstruction.}

At this stage, we reconstruct the treatment $\mathbf{T}$ using the IV $\mathbf{Z}$ to obtain the reconstructed treatment $\mathbf{T}^{\mathrm{re}}$, which mitigates the confounding bias caused by latent confounders $\mathbf{B}$ by applying the idea in 2SLS method.
This process can be broken down into two steps: first, treatment decomposition, and second, treatment combination.

\textbf{Step I: Treatment decomposition.}
We regress $\mathbf{T}$ to get the fitted part of the regression $\widehat{\mathbf{T}}$:

\begin{equation}
\widehat{\mathbf{T}}=\left\{\widehat{\mathbf{T}}_j=f_{\text {proj }}\left(\mathbf{T}_j, \mathbf{Z}_j\right): j \in \mathbf{I}_u \cup\{i\}\right\},
\end{equation}
where $\mathbf{T}_j \in \mathbf{T}$ and $\mathbf{Z}_j \in \mathbf{Z}$, and $f_{\text{proj}}: \mathbb{R}^{d_i} \times \mathbb{R}^{d_q \times N} \mapsto \mathbb{R}^{d_q}$ is defined as the product of the matrix $\mathbf{Z}_j$ with an $N$-dimensional vector $\tau_j$.

\begin{equation}
f_{\text {proj }}\left(\mathbf{T}_j, \mathbf{Z}_j\right)=\mathbf{Z}_j \tau_j,
\end{equation}
where $\tau_j$ is the closed-form solution of a least squares regression. $\tau_j$ can be obtained by the following equation: 
\begin{equation}
\tau_j=\underset{\tau_j \in \mathrm{R}^N}{\arg \min }\left\|\mathbf{Z}_j \tau_j-\mathrm{MLP}_0\left(\mathbf{T}_j\right)\right\|_2^2=\mathbf{Z}_j^{\dagger} \mathrm{MLP}_0\left(\mathbf{T}_j\right),
\end{equation}
where $\mathbf{Z}_j^{\dagger}$ is the Moore-Penrose pseudoinverse of $\mathbf{Z}_j$, and $\mathrm{MLP}_0: \mathbb{R}^{d_i} \mapsto \mathbb{R}^{d_q}$ is a multi-layer perceptron. We refer to $\widehat{\mathbf{T}}_j \in \widehat{\mathbf{T}}$ as the fitted part of the embedding $\mathbf{T}_j$.
Then, we obtain the residual part of the regression $\tilde{\mathbf{T}}$ as:
\begin{equation}
\tilde{\mathbf{T}}=\left\{\tilde{\mathbf{T}}_j=\operatorname{MLP}_0\left(\mathbf{T}_j\right)-\widehat{\mathbf{T}}_j: j \in \mathbf{I}_u \cup\{i\}\right\}
\end{equation}

\textbf{Step II: Treatment combination.}
We obtain the reconstructed treatment $\mathbf{T}^{\mathrm{re}}$ by combining $\widehat{\mathbf{T}}$ and $\widetilde{\mathbf{T}}$ as follows:

\begin{equation}
\mathbf{T}^{\mathrm{re}} = \left\{ \mathbf{T}_j^{\mathrm{re}} = \alpha_j^1 \widehat{\mathbf{T}}_j + \alpha_j^2 \tilde{\mathbf{T}}_j : j \in \mathbf{I}_u \cup \{i\} \right\},
\end{equation}
where $\widehat{\mathbf{T}}_j \in \widehat{\mathbf{T}}$ and $\tilde{\mathbf{T}}_j \in \widetilde{\mathbf{T}}$, and $\alpha_j^1 \in \mathbb{R}$ and $\alpha_j^2 \in \mathbb{R}$ are two weights, which are estimated by two multi-layer perceptrons (MLPs):

\begin{equation}
\begin{split}
& \alpha_j^1 = \operatorname{MLP}_1\left(\operatorname{MLP}_0\left(\mathbf{T}_j\right), \mathbf{Z}_j\right), \\
& \alpha_j^2 = \operatorname{MLP}_2\left(\operatorname{MLP}_0\left(\mathbf{T}_j\right), \mathbf{Z}_j\right),
\end{split}
\end{equation}
where the inputs to the two MLPs are concatenations of the transformed $\mathbf{T}_j$ and $\mathbf{Z}_j$, corresponding to item $j$.

\subsubsection{Interaction data reconstruction.}
Finally, we reconstruct the interaction data $\mathbf{X}$ using $\mathbf{T}^{\mathrm{re}}$ to obtain the debiased interaction data $\mathbf{X}^{\mathrm{re}}$, which serves as the input for iVAE in our IViDR method:

\begin{equation}
\mathbf{X}^{\mathrm{re}} = \mathbf{X} + \mathbf{T}_j^{\mathrm{re}}.
\end{equation}

\subsection{Learning the Latent Representation Using iVAE in IViDR Method}
In the previous section, we employed the IV approach to address the confounding bias caused by latent confounders $\mathbf{B}$, but the latent confounders $\mathbf{C}$ still bias the recommender system. To mitigate this, we propose using a proxy variable $\mathbf{W}$ (e.g., the mean price of consumed goods) to recover the representation of latent confounders $\mathbf{C}$.

We assume that $\mathbf{W}$ is a Bernoulli random variable with mean $\mu(\mathbf{C}) \in (0, 1)$ and is correlated with $\mathbf{C}$ given the exposure vector $\mathbf{A}$. The probability $p(r_{ui} = 1 \mid \mathbf{A}, \mathbf{W})$, representing the likelihood that user $u$ gives positive feedback on item $i$ given $\mathbf{A}$ and $\mathbf{W}$, can be inferred from the available dataset.
We further assume that $p(\mathbf{C} \mid \mathbf{A}, \mathbf{W})$, the probability distribution of the estimated latent representation $\mathbf{C}$ given $\mathbf{A}$ and $\mathbf{W}$, is uniquely defined by the factor model~\cite{10.5555/120565.120567}. Next, we show how to identify $p(r_{ui}^a)$ with proxy variables in general, noting that:
\begin{equation}
\label{eq:1}
\begin{split}
p(r_{u i}^a) 
= E_{\mathbf{C}}[p(r_{u i} | \mathbf{A}, \mathbf{C})] 
= \int_c p(\mathbf{C} = c) p(r_{u i} | \mathbf{A}, \mathbf{C} = c) dc.
\end{split}
\end{equation}

From Eq. \cref{eq:1}, it is clear that the key to estimating \( p(r_{ui}^a) \) lies in determining both \( p(\mathbf{C}) \) and \( p(r_{ui} \mid \mathbf{A}, \mathbf{C}) \). 
This process can be broken down into two steps: first, estimating \( p(\mathbf{C}) \), and second, determining \( p(r_{ui} \mid \mathbf{A}, \mathbf{C}) \).

\textbf{Step I:}
The first step is to obtain \( p(\mathbf{C}) \). 
This involves learning \( \mathbf{C} \) from  \( \mathbf{W} \), ensuring that the learned \( \mathbf{C} \) closely approximates the true latent confounders \( \mathbf{C} \) \cite{9,17} as follows:

\begin{equation}
\begin{split}
p(\mathbf{C} = c) &= E_{\mathbf{A}, \mathbf{W}}[p(\mathbf{C} | \mathbf{A}, \mathbf{W})]
\end{split}
\end{equation}

Hence, we need to learn $p\left(\mathbf{C} \mid \mathbf{A}, \mathbf{W}\right)$ from the data. In practice, we use the following equation:

\begin{equation}
\label{eq:3}
\begin{split}
&p(\mathbf{A} | \mathbf{W}) = \int_c p(\mathbf{A} | \mathbf{C} = c) p(\mathbf{C} = c | \mathbf{A}, \mathbf{W}) dc
\end{split}
\end{equation}
where $p(\mathbf{A} | \mathbf{W})$ can be learned directly from the data. Therefore, we need to recover $p\left(\mathbf{C} \mid \mathbf{A}, \mathbf{W}\right)$ and $p\left(\mathbf{A} \mid \mathbf{C}\right)$ from the data. In our IViDR framework, we use iVAE \cite{9} to recover both terms.

\textbf{Step II:}
In the second step, we obtain $p\left(r_{u i} \mid \mathbf{A}, \mathbf{C}\right)$, which can be inferred from $p\left(\mathbf{C} \mid \mathbf{A}, \mathbf{W}\right)$ and $p(r_{u i} | \mathbf{C} = c, \mathbf{A})$ as follows: 
\begin{equation}
\begin{split}
&p(r_{u i} | \mathbf{A}, \mathbf{W}) = \int_c p(r_{u i} | \mathbf{A}, \mathbf{C} = c) p(\mathbf{C} = c | \mathbf{A}, \mathbf{W}) dc
\end{split}
\end{equation}

In IViDR, we also use iVAE  to recover $p(r_{u i} | \mathbf{C} = c, \mathbf{A})$. With these two steps, we can obtain $p\left(\mathbf{C}\right)$ and $p\left(r_{u i} \mid \mathbf{A}, \mathbf{C}\right)$.
By applying Eq.\cref{eq:1}, we can determine $p\left(r_{u i}^a\right)$,  the probability distribution of the potential outcome, from the data.

We use the debiased interactional data $\mathbf{X}^{\mathrm{re}}$ as input to iVAE. To learn $p_\theta({\mathbf{C}}_1 \mid \mathbf{A}, \mathbf{W})$ (the details of obtaining the approximate posterior of latent confounders can be found in Appendix \ref{appendix:C2}), we use $q_\phi({\mathbf{C}}_1 \mid \mathbf{A}, \mathbf{W})$ as the approximate posterior.
We sample the latent confounder ${\mathbf{C}}_1$ from $q_\phi({\mathbf{C}}_1 \mid \mathbf{A}, \mathbf{W})$.
Next, we use the interactional data $\mathbf{X}$ as input to iVAE. To learn $p_\theta({\mathbf{C}}_2 \mid \mathbf{A}, \mathbf{W})$, we use $q_\phi({\mathbf{C}}_2 \mid \mathbf{A}, \mathbf{W})$ as the approximate posterior.
We sample the latent confounder ${\mathbf{C}}_2$ from $q_\phi\left({\mathbf{C}}_2 \mid \mathbf{A}, \mathbf{W}\right)$.
Finally, we fuse ${\mathbf{C}}_1$ and ${\mathbf{C}}_2$ to obtain latent confounder $\mathbf{C}$.

\begin{equation}
\label{eq:15}
\begin{split}
&\mathbf{C} = \rho * {{\mathbf{C}}_1}  +  \tau *{{\mathbf{C}}_2}
\end{split}
\end{equation}
where $\rho$ and  $\tau$ are tuning parameters.

Thus, we can recover the latent confounder \( \mathbf{C} \), and subsequently implement our iVAE submodel within the IViDR method:

\begin{equation}
\mathcal{L}_{iVAE} = E_{\rho * {q_\phi\left({\mathbf{C}}_1 \mid \mathbf{A}, \mathbf{W}\right)} + \tau * {q_\phi\left({\mathbf{C}}_2 \mid \mathbf{A}, \mathbf{W}\right)} }
\end{equation}

\subsection{The Identifiability of the Learned Representation}
In this section, we prove that the estimated latent representation $\mathbf{C}$ learned from $\mathbf{X}^{\mathrm{re}}$ is identifiable.
The identifiability of $\mathbf{C}$ learned by our IViDR is proved as follows. 

Let $\theta=(f, H, \lambda)$ be the parameter of the model $\Theta$ that generates the following condition:
\begin{equation}
\begin{split}
\label{zm1}
p_\theta\left(\mathbf{X}^{\mathrm{re}}, \mathbf{C} \mid \mathbf{W}\right)=p_f\left(\mathbf{X}^{\mathrm{re}} \mid \mathbf{C}\right) p_{H, \lambda}(\mathbf{C} \mid \mathbf{W})
\end{split}
\end{equation}
where \( p_f\left(\mathbf{X}^{\mathrm{re}} \mid \mathbf{C}\right) \) is defined as:

\begin{equation}
\begin{split}
\label{zm2}
p_f\left(\mathbf{X}^{\mathrm{re}} \mid \mathbf{C}\right)=p_{\varepsilon}\left(\mathbf{X}^{\mathrm{re}}-\mathbf{f}(\mathbf{C})\right)
\end{split}
\end{equation}
where the value of \( \mathbf{X}^{\mathrm{re}} \) is decomposed as \( \mathbf{X}^{\mathrm{re}} = f(\mathbf{C}) + \varepsilon \). Here, \( \varepsilon \) is an independent noise variable (independent of both \( \mathbf{C} \) and \( f \)), and its probability density function is given by \( p_{\varepsilon}(\varepsilon) \).

As stated in Assumption \ref{assumption 1}, we assume that the conditional distribution \( p_{H, \lambda}(\mathbf{C} \mid \mathbf{W}) \) is a conditional factor distribution and follows an exponential family of distributions, illustrating the relationship between the latent representation \( \mathbf{C} \) and the proxy variables \( \mathbf{W} \).

\begin{assumption}
\label{assumption 1}
The conditioning on $\mathbf{W}$ utilizes an arbitrary function—like a look-up table or a neural network—that produces the specific exponential family parameters $\lambda_{i, j}$. Consequently, the probability density function is expressed as follows:
\begin{equation}
\begin{split}
\label{zm3}
& p_{H, \lambda}(\mathbf{C} \mid \mathbf{W})= \prod_i \frac{Q_i\left(\mathbf{C}_{i}\right)}{Z_i(\mathbf{W})} \exp \left[\sum_{j=1}^k H_{i, j}\left(\mathbf{C}_{i}\right) \lambda_{i, j}(\mathbf{W})\right]
\end{split}
\end{equation}
where $Q_i$ is the base measure, $Z_i(\mathbf{W})$ is the normalization constant, $\lambda_i(\mathbf{W})=\left(\lambda_{i, 1}(\mathbf{W}), \ldots, \lambda_{i, k}(\mathbf{W})\right)$ is a parameter related to $\mathbf{W}$, $\mathbf{H}_i=\left(H_{i, 1}, \ldots, H_{i, k}\right)$ is the sufficient statistic, and $k$ is the dimension of each sufficient statistic, held constant. 
\end{assumption}

To prove the identifiability of the learned representation $\mathbf{C}$ in our IViDR, we introduce the following definitions \cite{9}:

\begin{definition} [Identifiability classes]
\label{Definition 2}
Let $\sim$ be an equivalence relation on $\Theta$. We say that Eq. \cref{zm1} is identifiable (or $\sim$ identifiable) under $\sim$ if:
\begin{equation}
\begin{split}
p_{\mathbf{\theta}}(\mathbf{X}^{\mathrm{re}})=p_{\tilde{\theta}}(\mathbf{X}^{\mathrm{re}}) \Longrightarrow \tilde{\mathbf{\theta}} \sim \mathbf{\theta}
\end{split}
\end{equation}

$\Theta / \sim$ are called identifiability classes.
\end{definition}

We now define the equivalence relation on the parameter set \( \Theta \).
\begin{definition}
Let $\sim_M$ be the equivalence relation defined on $\Theta$ as follows:

\begin{equation}
\begin{split}
& (\mathbf{f}, \mathbf{H}, \mathbf{\lambda}) \sim(\tilde{\mathbf{f}}, \tilde{\mathbf{H}}, \tilde{\mathbf{\lambda}}) 
 \Leftrightarrow \exists M, \mathbf{q} \mid \mathbf{H}\left(\mathbf{f}^{-1}(\mathbf{X}^{\mathrm{re}})\right)
\\& =M \tilde{\mathbf{H}}\left(\tilde{\mathbf{f}}^{-1}(\mathbf{X}^{\mathrm{re}})\right)+\mathbf{q}, \forall \mathbf{X}^{\mathrm{re}} \in \mathcal{X}
\end{split}
\end{equation}
where $M$ is a $n k \times n k$ matrix and $\mathbf{q}$ is a vector.

\end{definition}

The identifiability of the learned representation 
$\mathbf{C}$ in our IViDR can be derived as follows:

\begin{theorem}
\label{Theorem 2}
Assuming we have data collected from a generative model as defined by Eqs. \cref{zm1}-\cref{zm3}, with parameters \( (\mathbf{f}, \mathbf{H}, \mathbf{\lambda}) \). Suppose the following conditions hold:
\begin{enumerate}[label=(\roman *)]
\item 
\label{i}
The set $\left\{\mathbf{X}^{\mathrm{re}} \in \mathcal{X} \mid \varphi_{\varepsilon}(\mathbf{X}^{\mathrm{re}})=0\right\}$ has measure zero, where $\varphi_{\varepsilon}$ is the  characteristic function of the density $p_{\varepsilon}$ defined in Eq.\cref{zm2}. 
\item 
\label{ii}
The hybrid function $\mathbf{f}$ in Eq.\cref{zm2} is injective.
\item
\label{iii}
The sufficient statistic $H_{i, j}$ in Eq.\cref{zm3} is differentiable almost everywhere, and $\left(H_{i, j}\right)_{1 \leq j \leq k}$ is linearly uncorrelated on any subset of $\mathcal{X}$ that has a measure greater than zero.
\item
\label{iv}
There exist $n k+1$ distinct points $\mathbf{W}_0, \ldots, \mathbf{W}_{n k}$ such that the matrix
\begin{equation}
\begin{split}
L=\left(\mathbf{\lambda}\left(\mathbf{W}_1\right)-\mathbf{\lambda}\left(\mathbf{W}_0\right), \ldots, \mathbf{\lambda}\left(\mathbf{W}_{n k}\right)-\mathbf{\lambda}\left(\mathbf{W}_0\right)\right)
\end{split}
\end{equation}
of size $n k \times n k$ is invertible.
\end{enumerate}
Then the parameters $(\mathbf{f}, \mathbf{H}, \mathbf{\lambda})$ are $\sim_M$ identifiable.
\end{theorem}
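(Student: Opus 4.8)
The plan is to adapt the identifiability argument for identifiable VAEs of Khemakhem et al.~\cite{9} to our conditioning variable $\mathbf{W}$ and reconstructed observation $\mathbf{X}^{\mathrm{re}}$. I would start from the hypothesis $p_{\theta}(\mathbf{X}^{\mathrm{re}}) = p_{\tilde\theta}(\mathbf{X}^{\mathrm{re}})$ and, since both marginals are obtained by integrating the conditional law against the same distribution of $\mathbf{W}$, reduce the problem to showing $p_{\theta}(\mathbf{X}^{\mathrm{re}}\mid\mathbf{W}) = p_{\tilde\theta}(\mathbf{X}^{\mathrm{re}}\mid\mathbf{W})$ for ($\mathbf{W}$-almost) every $\mathbf{W}$. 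Using the decomposition in Eqs.~\cref{zm1}--\cref{zm2} together with the injectivity of $\mathbf{f}$ (condition~\ref{ii}), I would apply the change of variables $\mathbf{C} = \mathbf{f}^{-1}(\cdot)$ to rewrite this conditional density as a convolution of the noise density $p_{\varepsilon}$ with the pushforward of $p_{H,\lambda}(\,\cdot\mid\mathbf{W})$ through $\mathbf{f}$, corrected by the Jacobian determinant of $\mathbf{f}^{-1}$.

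Next I would take Fourier transforms of both sides: the convolution becomes a pointwise product, and condition~\ref{i} --- the characteristic function $\varphi_{\varepsilon}$ of $p_{\varepsilon}$ vanishes only on a set of measure zero --- lets me cancel $\varphi_{\varepsilon}$ and conclude that the transformed (volume-corrected) prior densities agree. Transforming back and taking logarithms, I would obtain, for every $\mathbf{X}^{\mathrm{re}}\in\mathcal{X}$, an identity equating $\log p_{H,\lambda}\!\left(\mathbf{f}^{-1}(\mathbf{X}^{\mathrm{re}})\mid\mathbf{W}\right)$ plus a Jacobian term with the corresponding tilde quantities. Substituting the exponential-family form of Assumption~\ref{assumption 1} turns this into a relation that is \emph{linear} in the sufficient statistics $\mathbf{H}$, with $\mathbf{W}$-dependent coefficients $\lambda(\mathbf{W})$ and extra normalization terms $\log Z_i(\mathbf{W})$.

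To eliminate the nuisance terms, I would evaluate this identity at the $nk+1$ distinct points $\mathbf{W}_0,\dots,\mathbf{W}_{nk}$ furnished by condition~\ref{iv} and subtract the equation at $\mathbf{W}_0$ from each of the others; assembling the $nk$ resulting equations into matrix form and invoking the invertibility of $L$, I can solve for $\mathbf{H}\!\left(\mathbf{f}^{-1}(\mathbf{X}^{\mathrm{re}})\right)$ and arrive at exactly the affine relation $\mathbf{H}\!\left(\mathbf{f}^{-1}(\mathbf{X}^{\mathrm{re}})\right) = M\,\tilde{\mathbf{H}}\!\left(\tilde{\mathbf{f}}^{-1}(\mathbf{X}^{\mathrm{re}})\right) + \mathbf{q}$ for a suitable $nk\times nk$ matrix $M$ and vector $\mathbf{q}$, which is precisely the definition of $\sim_M$. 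The remaining point is to verify that $M$ is a genuine change of basis, i.e.\ invertible: differentiating the affine relation and using condition~\ref{iii} (the components of each $\mathbf{H}_i$ are linearly uncorrelated on any positive-measure subset of $\mathcal{X}$, so one can locate $nk$ points at which the Jacobians of $\mathbf{H}\circ\mathbf{f}^{-1}$ are linearly independent) forces $M$ to have full rank, completing the proof.

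The step I expect to be the main obstacle is the analytic part that strips off the observation noise: performing the convolution/Fourier argument while carefully bookkeeping the volume-correction (Jacobian) factors introduced by $\mathbf{f}^{-1}$, and ensuring the densities are handled on the image of $\mathbf{f}$ rather than on all of $\mathbb{R}^{\dim \mathbf{X}^{\mathrm{re}}}$ --- this is exactly where injectivity of $\mathbf{f}$ and the almost-everywhere nonvanishing of $\varphi_{\varepsilon}$ are indispensable. By comparison, the bookkeeping with the $nk+1$ evaluation points and the final rank argument for $M$ are routine linear algebra.
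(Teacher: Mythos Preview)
Your proposal is correct and follows essentially the same three-step argument as the paper's proof: (1) the convolution/Fourier trick with condition~\ref{i} to strip off the noise and equate the volume-corrected pushforward priors, (2) plugging in the exponential-family form, evaluating at $\mathbf{W}_0,\dots,\mathbf{W}_{nk}$, subtracting the $\mathbf{W}_0$ equation, and inverting $L$ to obtain the affine relation, and (3) differentiating and using condition~\ref{iii} to force $M$ to be invertible. The only minor discrepancy is that the paper starts directly from $p_{\mathbf{f},\mathbf{H},\mathbf{\lambda}}(\mathbf{X}^{\mathrm{re}}\mid\mathbf{W})=p_{\tilde{\mathbf{f}},\tilde{\mathbf{H}},\tilde{\mathbf{\lambda}}}(\mathbf{X}^{\mathrm{re}}\mid\mathbf{W})$ rather than first reducing from equality of marginals, but this does not affect the substance of the argument.
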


The proof is structured into three main steps.

In the first step, we apply the simple convolution technique, as permitted by Assumption \hyperref[i]{(i)}, to transform the equation for the debiased interaction data distribution into one representing the noise-free distribution. This effectively reduces the noisy scenario to a noise-free case, leading to Eq. \cref{zm12}.

In the second step, we eliminate all terms related to the debiased interaction data \( \mathbf{X}^{\mathrm{re}} \) and the proxy variable \( \mathbf{W} \). This is accomplished by leveraging  the conditions provided by assumption \hyperref[iv]{(iv)} and using \( \mathbf{W}_0 \) as the pivot. This is detailed in Eqs. \cref{zm12}-\cref{zm17}.

Finally, we demonstrate that the linear transformation is invertible, establishing an equivalence relation. This final step relies on Assumption \hyperref[iii]{(iii)}.

Detailed proofs of the three steps can be found in Appendix \ref{appendix:B}.

\subsection{The Proposed IViDR Framework}
We propose a generalized framework that is compatible with any recommendation model. For comparison, we select the Matrix Factorisation (MF) model as the recommendation model for our framework. The predicted rating \( \mathcal{L}_{MF} \) is given by:

\begin{equation}
\begin{split}
\mathcal{L}_{MF} =p_u^T q_i + k_u + k_i
\end{split}
\end{equation} 
where \( p_u \) is the latent vector for user \( u \), \( q_i \) is the latent vector for item \( i \), and \( T \) denotes the transpose operation. Furthermore, \( k_u \) represents the user preference bias term, and \( k_i \) is the item preference bias.

Finally, we utilise a point-wise recommendation model parameterised by \( \eta \) to estimate \( p\left(r_{ui} \mid \mathbf{A}, \mathbf{C}\right) \). In particular, we utilize a additive model \( f(u, i, \mathbf{C}; \eta) \). The additive model is:

\begin{equation}
\begin{split}
&f(u, i, \mathbf{C} ; \eta) = \phi *\mathcal{L}_{iVAE}  +  \lambda*\mathcal{L}_{MF}
\end{split}
\end{equation} where $\phi, \lambda$ are tuning parameters. Therefore, our final loss function for IViDR is defined as:

\begin{equation}
\label{eq:18}
\small
\begin{split}
\mathcal{L}_{IViDR} &= \frac{1}{|\mathcal{D}|} \sum_{(u, i) \in \mathcal{D}}  l\left(E_{\rho * {q_\phi\left({\mathbf{C}}_1 \mid \mathbf{A}, \mathbf{W}\right)} +  \tau * {q_\phi\left({\mathbf{C}}_2 \mid \mathbf{A}, \mathbf{W}\right)}}  \left[f(u, i, \mathbf{C} ; \eta)\right], r_{u i}\right)
\end{split}
\end{equation}
where \( l(\cdot, \cdot) \) denotes the binary cross-entropy (BCE) loss, which is commonly used in such models.

\paragraph{Limitations} The success of IViDR relies on the assumptions of IVs and the existence of proxy variables in the data. However, the relationship between user preferences, features, and recommendation mechanisms is often complex, making it challenging to validate the assumptions and ensure that appropriate IVs and proxy variables are available. Additionally, the success of IViDR depends on the quality and relevance of these variables, which may not always be guaranteed in real-world datasets.

\begin{table}
\small
\caption{The statistic of Coat, Yahoo!R3, and KuaiRand.}

    \centering
\begin{tabular}{ p{1.4cm} p{1cm} p{1cm} p{1.2cm} p{1.2cm}  } 
 \hline
 Dataset & \#User & \#Item & \#Biased Data & \#Unbiased Data\\ 
 \hline
 Coat & 290 & 300 & 6,960 & 4,640\\ 
 Yahoo! R3 & 5,400 & 1,000 & 129,179 & 54,000\\ 
 KuaiRand  & 23,533 & 6,712 & 1,413,574 & 954,814 \\ 
 \hline
\end{tabular}
\label{table: statistic}
\vspace{-10pt}
\end{table}

\begin{table*}[t]
\caption{The performance metrics for all methods across three real-world datasets are provided in detail. The best method is highlighted in bold, and the second-best is underlined. We report the p-values from t-tests to assess the statistical significance of the performance differences.}
\centering
\resizebox{\textwidth}{!}{
\label{tab:2} 
\begin{tabular}{c|cc|cc|cc}
\hline
\multirow{2}{*}{Datasets} & \multicolumn{2}{|c|}{Coat} & \multicolumn{2}{c|}{Yahoo!R3} & \multicolumn{2}{c}{KuaiRand} \\
& NDCG@5 & RECALL@5 & NDCG@5 & RECALL@5 & NDCG@5 & RECALL@5 \\
\hline 
MF & $0.5590 \pm 0.0126$ & $0.5418 \pm 0.0116$ & $0.5624 \pm 0.0087$ & $0.7125 \pm 0.0076$ & $0.3732 \pm 0.0011$ & $0.3244 \pm 0.0008$ \\
MF-WF & $0.5577 \pm 0.0120$ & $0.5336 \pm 0.0102$ & $0.5617 \pm 0.0087 $ & $0.7120 \pm 0.0127$ & $0.3710 \pm 0.0006$ & $0.3238 \pm 0.0009$ \\
IPS & $0.5511 \pm 0.0154$ & $0.5358 \pm 0.0113$ & $0.5500 \pm 0.0037$ & $0.6961 \pm 0.0063$ & $0.3698 \pm 0.0014$ & $0.3234 \pm 0.0011$ \\
RD-IPS & $0.5484 \pm 0.0144$ & $0.5302 \pm 0.0176$ & $0.5334 \pm 0.0116$ & $0.6796 \pm 0.0146$ & $0.3469 \pm 0.0009$ & $0.3116  \pm 0.0012$ \\
InvPref & $0.5372 \pm 0.0110$ & $0.5247 \pm 0.0097$ & $0.5881 \pm 0.0037$ & $0.7388 \pm 0.0044$ & $0.3797 \pm 0.0010$ & $0.3282 \pm 0.0005$ \\
DDCF-MF & $0.5631 \pm 0.0068$ & $0.5425 \pm 0.0107$ & $0.6260 \pm 0.0064$ & $0.7636 \pm 0.0052$ & $0.3959 \pm 0.0007$ & $0.3456 \pm 0.0009$ \\
IV4R-MF & $0.5617 \pm 0.0099$ & $0.5433 \pm 0.0097$ & $0.5653 \pm 0.0062$ & $0.7152 \pm 0.0085$ & $0.3741 \pm 0.0010$ & $0.3251 \pm 0.0008$ \\
iDCF  & $\underline{0.5638} \pm 0.0124$ & $\underline{0.5493} \pm 0.0111$ & $\underline{0.6410} \pm 0.0022$ & $\underline{0.7780} \pm 0.0039$ & $\underline{0.4080} \pm 0.0004$ & $\underline{0.3481} \pm 0.0008$ \\
IViDR  & $\mathbf{0.5903} \pm 0.0101$ & $\mathbf{0.5783} \pm 0.0114$ & $\mathbf{0.6602} \pm 0.0038$ & $\mathbf{0.7901} \pm 0.0037$ & $\mathbf{0.4161} \pm 0.0004$ & $\mathbf{0.3549} \pm 0.0006$ \\ 
\hline
p-value & $6 e^{-5}$ & $1 e^{-5}$ & $1 e^{-9}$ & $1 e^{-6}$ & $5 e^{-20}$ & $1 e^{-13}$ \\
\hline
\end{tabular}
} 
\end{table*}

\section{Experiments}
\label{sec:experiment}

In this section, we perform extensive experiments to evaluate the effectiveness of our IViDR method within recommender systems, especially when latent confounders are present.

\subsection{Experimental Settings}
 We begin by introducing the datasets, followed by descriptions of the baseline methods and evaluation metrics. Lastly, we present the parameter settings used for the IViDR model.

\paragraph{Dataset description.}

We conduct experiments on three real-world datasets, summarized in Table \ref{table: statistic}. Details of the datasets can be found in Appendix \ref{appendix:D1}.

\paragraph{Baselines.}
We compare our method with several state-of-the-art (SOTA) deconfounding methods. 
(1) MF \cite{12} \& MF with features (MF-WF): MF decomposes a rating matrix into the product of two lower dimensional matrices. 
MF-WF is an improved version of MF.
(2) IPS \cite{23} \& RD-IPS \cite{4}:  IPS estimates causal effects in observations, and reduces selection bias by IPW. RD-IPS is an improved version of IPS. 
(3) InvPref \cite{30}: InvPref decouples a user's true preferences from biased behaviour and mitigates bias using invariant learning.
(4) DDCF-MF \cite{38}: DeepDCF combines deep learning and collaborative filtering techniques to predict user preferences for items. DDCF-MF uses MF as the backbone.
(5) IV4R-MF \cite{IV-serach-data}: IV4Rec applies IVs to decompose input vectors within recommendation models to mitigate the effects of latent confounders. IV4R-MF uses MF as the backbone.
(6) iDCF \cite{idcf}: The iDCF incorporates proximal causal inference techniques to identifiably predict users' counterfactual feedback on items.

\paragraph{Evaluation metrics.}
We evaluate the top five recommendations using RECALL@5 and NDCG@5 metrics. Each method is tested across ten trials, and we report the average and standard deviation to demonstrate the effectiveness of each method.

\paragraph{Implementation and Configuration.}
We implement our IViDR model in PyTorch. A grid search is performed to determine the optimal hyperparameters for each method. The learning rate is selected from the set \{1e-3, 5e-4, 1e-4, 5e-5, 1e-5\}, and the weight decay is chosen from \{1e-5, 1e-6\}. To ensure a fair comparison, we utilize the \textit{Adam} optimizer~\cite{kingma2014adam} for all methods, including our IViDR method.

We set the parameters \( \phi \) and \( \lambda \) in Eq. \cref{eq:18} to 1, and \( \rho \) and \( \tau \) in Eq. \cref{eq:15} to 0.9. The treatment embedding and user feature embeddings are 32 dimensions in the Coat dataset, 96 dimensions in the Yahoo!R3 dataset, and 128 dimensions in the KuaiRand dataset.

\begin{table*}[t]
\caption{The ablation study of our IViDR method on three real-world datasets. }
\centering
\resizebox{\textwidth}{!}{
\label{tab:as} 
\begin{tabular}{c|cc|cc|cc}
\hline
\multirow{2}{*}{Datasets} & \multicolumn{2}{|c|}{Coat} & \multicolumn{2}{c|}{Yahoo!R3} & \multicolumn{2}{c}{KuaiRand} \\
& NDCG@5 & RECALL@5 & NDCG@5 & RECALL@5 & NDCG@5 & RECALL@5 \\
\hline 
IV4R-MF & $0.5617 \pm 0.0099$ & $0.5433 \pm 0.0097$ & $0.5653 \pm 0.0062$ & $0.7152 \pm 0.0085$ & $0.3741 \pm 0.0010$ & $0.3251 \pm 0.0008$ \\
iDCF  & ${0.5638} \pm 0.0124$ & ${0.5493} \pm 0.0111$ & ${0.6410} \pm 0.0022$ & ${0.7780} \pm 0.0039$ & ${0.4080} \pm 0.0004$ & ${0.3481} \pm 0.0008$ \\
IViDR-T  & ${0.5669} \pm 0.0103$ & ${0.5513} \pm 0.0106$ & ${0.6428} \pm 0.0039$ & ${0.7794} \pm 0.0036$ & ${0.4104} \pm 0.0006$ & ${0.3501} \pm 0.0007$ \\ 
IViDR-F  & ${0.5846} \pm 0.0121$ & ${0.5710} \pm 0.0094$ & $\underline{0.6552} \pm 0.0041$ & ${0.7849} \pm 0.0045$ & ${0.4130} \pm 0.0005$ & ${0.3529} \pm 0.0007$ \\ 
IViDR-R  & $\underline{0.5875} \pm 0.0105$ & $\underline{0.5728} \pm 0.0101$ & ${0.6514} \pm 0.0035$ & $\underline{0.7857} \pm 0.0030$ & $\underline{0.4151} \pm 0.0005$ & $\underline{0.3541} \pm 0.0008$ \\ 

IViDR  & $\mathbf{0.5903} \pm 0.0101$ & $\mathbf{0.5783} \pm 0.0114$ & $\mathbf{0.6602} \pm 0.0038$ & $\mathbf{0.7901} \pm 0.0037$ & $\mathbf{0.4161} \pm 0.0004$ & $\mathbf{0.3549} \pm 0.0006$ \\ \hline

\end{tabular}
} 
\end{table*}

\subsection{Performance Comparison}
Table \ref{tab:2} presents the experimental results for each method, which are also visualized in Figure \ref{fig:4} (with Figure \ref{fig:4} located in Appendix \ref{appendix:E1}). The results demonstrate that IViDR significantly outperforms all other algorithms across all datasets and evaluation metrics. From Table \ref{tab:2} and Figure \ref{fig:4}, we observe the following:  
(1) Our proposed IViDR achieves the best performance across all datasets and metrics. The p-values indicate that the performance improvement of IViDR over other methods is statistically significant. This demonstrates that IViDR effectively enhances recommendation accuracy and reduces bias.  
(2) iDCF across all datasets indicates its effectiveness in mitigating confounding bias through the use of latent confounders. 
(3) IV4R-MF exhibits clear improvements over the basic MF method. By leveraging instrumental variables, IV4R-MF enhances the accuracy of recommendations across multiple datasets.  
(4) DDCF-MF shows notable improvements over the basic MF and MF-WF methods, especially on the Yahoo!R3 and KuaiRand datasets. The incorporation of deep learning components allows DDCF-MF to effectively capture complex feature interactions, enhancing recommendation accuracy and robustness.


\subsection{Ablation Study}

To further understand the impact of each component in our model, we performed ablation studies. We verified the following cases: 
(1) The case where the original treatment \( \mathbf{T} \) is added to the interactional data (IViDR-T). 
(2) The case where only the fitted part \( \widehat{\mathbf{T}} \) is added to the interactional data (IViDR-F). 
(3) The case where only the residual part \( \widetilde{\mathbf{T}} \) is added to the interactional data (IViDR-R). 

The experimental results are shown in Table \ref{tab:as}. A detailed analysis of the ablation studies can be found in Appendix \ref{appendix:E1}. We also performed a hyperparameter analysis, with a detailed discussion and complete results provided in Appendix \ref{appendix:E2}.

\subsection{The Correctness of the Learned Latent Representation Using Our IViDR}

We optimised the simulation data generation method from iDCF \cite{idcf} to better match realistic scenarios. Specifically, we changed the economic level of users from a uniform distribution to a Gaussian-like distribution. We increased the number of users from 2,000 to 10,000, and the number of items from 300 to 1,000. Additionally, we added the mixing of item factors (see Appendix \ref{appendix:F}).

There are three important hyperparameters: \( \alpha \), which controls the density of the exposure vector; \( \beta \), which represents the weight of the confounding effect on user preferences; and \( \gamma \), which controls the weight of random noise in the user’s exposure.

\begin{figure}[t]
    \centering
    \begin{minipage}[b]{0.31\linewidth}
        \centering
        \includegraphics[width=\linewidth]{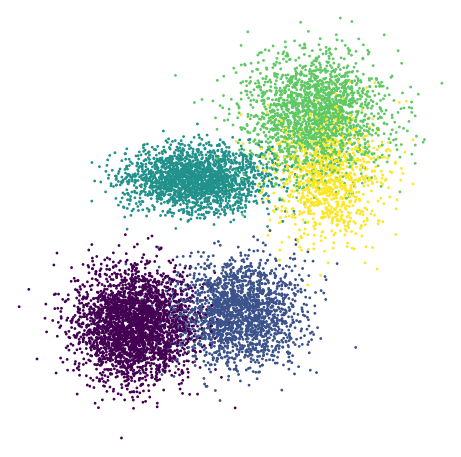}
         \captionsetup{labelformat=empty} 
        \caption*{(a) Ground truth.}
        \label{t}  
    \end{minipage}
    \hfill
    \begin{minipage}[b]{0.31\linewidth}
        \centering
        \includegraphics[width=\linewidth]{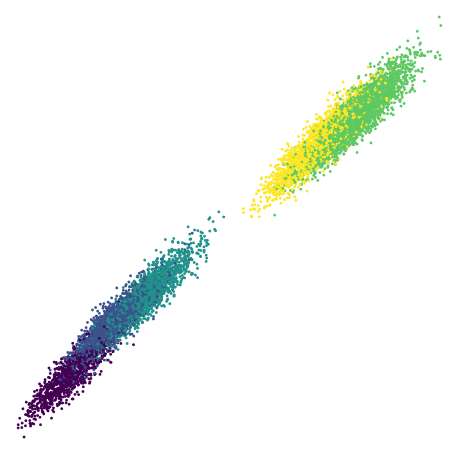}
         \captionsetup{labelformat=empty} 
        \caption*{(b) iDCF}
    \end{minipage}
    \hfill
    \begin{minipage}[b]{0.31\linewidth}
        \centering
        \includegraphics[width=\linewidth]{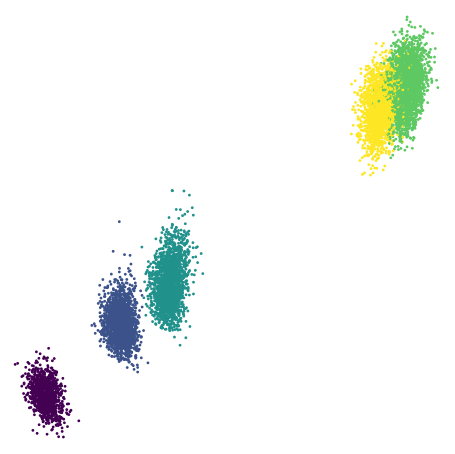}
         \captionsetup{labelformat=empty} 
        \caption*{(c) IViDR(ours).}
    \end{minipage}
    \caption{(a) The visualisation of the true latent confounder. (b) The estimated latent confounders using iDCF. (c) The estimated latent confounders using IViDR.}
    \label{fig123}
    \vspace{-10pt}
\end{figure}

\begin{table}[t]
\small
\caption{Comparison of Models}
\centering
\begin{tabular}{p{1.6cm}| p{0.8cm}| p{0.8cm} | p{0.8cm} | p{0.8cm} | p{0.8cm} } 
\hline
\diagbox[width=2cm,height=1.1cm]{Model}{$\gamma$} & $0.0$ & $5.0$ & $10.0$ & $15.0$ & $20.0$ \\
\hline
iDCF & $0.8162$ & $0.7034$ & $0.6475$ & $0.5449$ & $0.4264$ \\
IViDR(ours)  & $0.8405$ & $0.7826$ & $0.7659$ & $0.6879$ & $0.6262$ \\
\hline
\end{tabular}
\label{tab:tr}
\vspace{-5pt}
\end{table}

\textbf{The visualisation of the latent confounder.}
Figure \ref{fig123}a shows the ground truth of the latent confounder with the exposure noise weight \( \gamma = 0 \). 
Figure \ref{fig123}b shows that the latent confounders predicted by iDCF still exhibit some overlap. 
Figure \ref{fig123}c demonstrates that the structure of the clusters of latent confounders estimated by IViDR is much closer to the true latent confounder distribution.

\textbf{Effect of Learning Identifiable Latent Confounders.}
To accurately evaluate the difference between the estimated and true latent confounders, we calculated the mean correlation coefficient (MCC) between them. 
MCC is a widely used metric for evaluating the accuracy of estimated latent confounders \cite{9}. 
For comparison, we adopted the same exposure noise settings (fixed \( \alpha = 0.1 \) and \( \beta = 2.0 \)) as in iDCF \cite{idcf}. 
The results, shown in Table \ref{tab:tr}, indicate that IViDR achieves more accurate estimations than iDCF, particularly as the exposure noise weight \( \gamma \) increases.

\vspace{-15pt}

\section{Conclusion}
In this paper, we propose a novel debiasing method (IViDR) that jointly integrates the  IV method and iVAE for mitigating  dual latent confounding biases in recommender systems. By leveraging the IV method and the iVAE, IViDR effectively mitigates dual confounding biases caused by latent confounders both between items and user feedback, and between item exposure and user feedback. This dual capability makes IViDR more robust compared to existing methods, which typically only address a single type of latent confounder. Our IViDR method uses user feature embeddings as IVs to reconstruct treatments and generate debiased interaction data, thereby mitigating confounding between item embeddings and user feedback. IViDR then uses iVAE to infer identifiable representations of latent confounders affecting item exposure and user feedback. We provided theoretical analyses demonstrating the soundness of using IVs and the identifiability of latent representations. Experimental evaluations on synthetic and three real-world datasets demonstrate that IViDR accurately estimates latent confounders and significantly reduces bias compared to existing methods, leading to improved recommendation performance.

\bibliographystyle{ACM-Reference-Format}
\balance
\bibliography{sample-base}

\appendix

\section{Poof of Theorem 1}
\label{proof}
 \begin{theorem}
\label{Theorem 1} 
Given a causal DAG $\mathcal{G} = (\mathbf{Z} \cup \mathbf{W} \cup \mathbf{C} \cup \mathbf{B} \cup \{\mathbf{T}, \mathbf{A}, \mathbf{R}\}, \mathbf{E})$, where $\mathbf{T}$ and $\mathbf{R}$ are the treatment and outcome, respectively, and $\mathbf{E}$ is the set of edges between the variables. Let $\mathbf{B}$ denote the latent confounders between $\mathbf{T}$ and $\mathbf{R}$. Suppose there exists a directed edge $\mathbf{T} \rightarrow \mathbf{R}$ in $\mathbf{E}$, and  $\mathbf{Z}$ represents the embeddings of user features. Additionally, assume that the underlying data generating process is faithful to the proposed causal DAG $\mathcal{G}$ as shown in Figure \ref{fig:2}. Therefore, $\mathbf{Z}$ serves as the valid IV for estimating the causal effect of $\mathbf{T}$ on $\mathbf{R}$.
\end{theorem}

\begin{proof}
First, we assume that the data-generating process is faithful to the causal DAG $\mathcal{G}$ in Figure \ref{fig:2}, meaning that the conditional dependencies and independencies between the variables in the data can be accurately read from the DAG. Thus, we prove that the embeddings of user features, $\mathbf{Z}$, satisfy the three conditions of Definition \hyperref[Definition 1]{1} based on the causal DAG shown in Figure \ref{fig:2}. (1) In the causal DAG $\mathcal{G}$, there exists a direct edge $\mathbf{Z} \rightarrow \mathbf{T}$, which implies that $\mathbf{Z} \nindep_d \mathbf{T}$. Therefore, the first condition of Definition \hyperref[Definition 1]{1} holds.
(2) In the causal DAG $\mathcal{G}$, there are no direct paths from $\mathbf{Z}$ to $\mathbf{R}$ outside of $\mathbf{T}$, we have $\mathbf{Z} \indep_d \mathbf{R} \mid \{\mathbf{T}, \mathbf{B}\}$. This satisfies the second condition of Definition \hyperref[Definition 1]{1}. (3) Lastly, we establish that there is no confounding bias between $\mathbf{Z}$ and the outcome $\mathbf{R}$. In the causal DAG $\mathcal{G}$, there are no any back-door paths from  $\mathbf{Z}$ to $\mathbf{R}$, i.e.,  $(\mathbf{Z} \indep_{d} \mathbf{R})_{\mathcal{G}_{\underline{\mathbf{T}}}}$.  Hence, the third condition of Definition \hyperref[Definition 1]{1} is satisfied.

Therefore, we conclude that $\mathbf{Z}$ satisfies all three conditions required for valid instrumental variables.
\end{proof}

\section{Algorithm}
\label{appendix:A}

\begin{algorithm}
    \SetKwInOut{Input}{Input}\SetKwInOut{Output}{Output}
    \caption{Instrumental Variables-based Identifiable Disentangled
Debiased Learning (IViDR).}
    \label{alg}
    \Input{$\{\mathbf{X}, i\}, \forall i \in \mathbf{I}$, $\{\mathbf{A}, \mathbf{W}\},\forall u \in \mathbf{U}$, $\{r_{ui}\}, \forall (u,i) \in \mathcal{D}$ }
    
    \tcp{Treatment reconstruction using IVs}

    Reconstruct treatment $\mathbf{T}$ by instrumental variables $\mathbf{Z}$ to get the reconstructed treatment $\mathbf{T}^{\mathrm{re}}$.

    Fuse the input $\mathbf{X}$ of iVAE with the reconstructed treatment $\mathbf{T}^{\mathrm{re}}$ to obtain the debiased interactional data $\mathbf{X}^{\mathrm{re}}$.

    $\mathbf{X}^{\mathrm{re}} \leftarrow \mathbf{X} + \mathbf{T}_j^{\mathrm{re}}$
        
    \tcp{{Learning latent confounders}}
    
    Calculate the latent confounders distribution $q_\phi\left({\mathbf{C}}_1 \mid \mathbf{A}, \mathbf{W}\right)$ for each user $u$ by maximizing Eq.\cref{eq:13}\;

    Calculate the latent confounders distribution $q_\phi\left({\mathbf{C}}_2 \mid \mathbf{A}, \mathbf{W}\right)$ for each user $u$ by maximizing Eq.\cref{eq:13}\;

    Fuse $q_\phi\left({\mathbf{C}}_1 \mid \mathbf{A}, \mathbf{W}\right)$ and $q_\phi\left({\mathbf{C}}_2 \mid \mathbf{A}, \mathbf{W}\right)$ to get latent confounders $q_\phi\left(\mathbf{C} \mid \mathbf{A}, \mathbf{W}\right)$ ($\rho, \tau$ are tuning parameters).

    $\mathbf{C} \leftarrow \rho * {{\mathbf{C}}_1}  +  \tau *{{\mathbf{C}}_2}$
    
   \tcp{Training recommendation model}
   
    Initialize a recommendation model $f(u, i,\mathbf{C}; \eta)$ with parameters $\eta$\;
    \While{Stop condition is not reached}
    {
    Fetch $(u,i)$ from $\mathcal{D}$\;
    Minimize the loss Eq.\cref{eq:18} to optimize $\eta$;
    }
   
\end{algorithm}

The algorithm consists of three main steps.
The first step is treatment reconstruction using IVs.
We use $\mathbf{Z}$ as the IV to reconstruct the treatment, obtaining $\mathbf{T}^{\mathrm{re}}$.
We then combine the input $\mathbf{X}$ of iVAE with $\mathbf{T}^{\mathrm{re}}$ to obtain the debiased interactional data $\mathbf{X}^{\mathrm{re}}$.
The second step is learning latent confounders.
We use the debiased interactional data $\mathbf{X}^{\mathrm{re}}$ to calculate $q_\phi\left({\mathbf{C}}_1 \mid \mathbf{A}, \mathbf{W}\right)$ for each user $u$ by maximizing Eq.\cref{eq:13}\/.
We use the input $\mathbf{X}$ of iVAE to calculate $q_\phi\left({\mathbf{C}}_2 \mid \mathbf{A}, \mathbf{W}\right)$ for each user $u$ by maximizing Eq.\cref{eq:13}\/.
We then combine \( q_\phi\left({\mathbf{C}}_1 \mid \mathbf{A}, \mathbf{W}\right) \vspace{1pt} \) and \( q_\phi\left({\mathbf{C}}_2 \mid \mathbf{A}, \mathbf{W}\right) \) to get latent confounder \( q_\phi\left(\mathbf{C} \mid \mathbf{A}, \mathbf{W}\right) \).
The third step is training the recommendation model.
We adjust for the learned latent representation to mitigate confounding bias.

\section{Detailed proofs of the three steps of Theorem 2.}
\label{appendix:B}

\begin{proof}[Proof of Theorem~\ref{Theorem 2}]

\textbf{Step I:}
We introduce here the volume of the matrix, denoted $\operatorname{vol} M$, which is the product of the singular values of $M$. 
When $M$ has full column rank, $\operatorname{vol} M=\sqrt{\operatorname{det} M^T M}$, and when $M$ is invertible, $\operatorname{vol} M=|\operatorname{det} M|$.
The matrix volume can substitute for Jacobi's absolute determinant in the variable transformation formula. 
This is especially useful when the Jacobian is a rectangular matrix $(n<d)$. 
Suppose we have two sets of parameters: $(\mathbf{f}, \mathbf{H}, \mathbf{\lambda})$ and $(\tilde{\mathbf{f}}, \tilde{\mathbf{H}}, \tilde{\mathbf{\lambda}})$ such that $p_{\mathbf{f}, \mathbf{H}, \mathbf{\lambda}}(\mathbf{X}^{\mathrm{re}} \mid \mathbf{W})=p_{\tilde{\mathbf{f}}, \tilde{\mathbf{H}}, \tilde{\mathbf{\lambda}}}(\mathbf{X}^{\mathrm{re}} \mid \mathbf{W})$ for all pairs $(\mathbf{X}^{\mathrm{re}}, \mathbf{W})$. 
Then:

\begin{equation}
\begin{split}
\label{zm4}
&\int_{c} p_{\mathbf{H}, \mathbf{\lambda}}(\mathbf{C} \mid \mathbf{W}) p_{\mathbf{f}}(\mathbf{X}^{\mathrm{re}} \mid \mathbf{C}) \mathrm{d} c
\\& =\int_{c} p_{\tilde{\mathbf{H}}, \tilde{\mathbf{\lambda}}}(\mathbf{C} \mid \mathbf{W}) p_{\tilde{\mathbf{f}}}(\mathbf{X}^{\mathrm{re}} \mid \mathbf{C}) \mathrm{d} c
\end{split}
\end{equation}

According to Eq. \cref{zm2}, we get:

\begin{equation}
\begin{split}
\label{zm5}
&\int_{c} p_{\mathbf{H}, \mathbf{\lambda}}(\mathbf{C} \mid \mathbf{W}) p_{\varepsilon}(\mathbf{X}^{\mathrm{re}}-\mathbf{f}(\mathbf{C})) \mathrm{d} c 
\\& = \int_{c} p_{\tilde{\mathbf{H}}, \tilde{\mathbf{\lambda}}}(\mathbf{C} \mid \mathbf{W}) p_{\varepsilon}(\mathbf{X}^{\mathrm{re}}-\tilde{\mathbf{f}}(\mathbf{C})) \mathrm{d} c
\end{split}
\end{equation}

In equation Eq. \cref{zm5}, we have made variable substitutions $\overline{\mathbf{X}}^{\mathrm{re}}=\mathbf{f}(\mathbf{C})$ on the left hand side, $\overline{\mathbf{X}}^{\mathrm{re}}=\tilde{\mathbf{f}}(\mathbf{C})$ on the right hand side, and $J$ denotes the Jacobian. We get:

\begin{equation}
\begin{split}
\label{zm6}
&\int_{\mathcal{X}} p_{\mathbf{H}, \mathbf{\lambda}}\left(\mathbf{f}^{-1}(\overline{\mathbf{X}}^{\mathrm{re}}) \mid \mathbf{W}\right) \operatorname{vol} J_{\mathbf{f}^{-1}}(\overline{\mathbf{X}}^{\mathrm{re}}) 
p_{\varepsilon}(\mathbf{X}^{\mathrm{re}}-\overline{\mathbf{X}}^{\mathrm{re}}) \mathrm{d} \overline{\mathbf{X}}^{\mathrm{re}}  \\&
=\int_{\mathcal{X}} p_{\tilde{\mathbf{H}}, \tilde{\mathbf{\lambda}}}\left(\tilde{\mathbf{f}}^{-1}(\overline{\mathbf{X}}^{\mathrm{re}}) \mid \mathbf{W}\right) \operatorname{vol} J_{\tilde{\mathbf{f}}^{-1}}(\overline{\mathbf{X}}^{\mathrm{re}}) 
p_{\varepsilon}(\mathbf{X}^{\mathrm{re}}-\overline{\mathbf{X}}^{\mathrm{re}}) \mathrm{d} \overline{\mathbf{X}}^{\mathrm{re}}
\end{split}
\end{equation}

In Eq.\cref{zm6}, we introduced:

\begin{equation}
\begin{split}
\label{zm7}
\tilde{p}_{\mathbf{H}, \mathbf{\lambda}, \mathbf{f}, \mathbf{W}}(\mathbf{X}^{\mathrm{re}})=p_{\mathbf{H}, \mathbf{\lambda}}\left(\mathbf{f}^{-1}(\mathbf{X}^{\mathrm{re}}) \mid \mathbf{W}\right) \operatorname{vol} J_{\mathbf{f}^{-1}}(\mathbf{X}^{\mathrm{re}}) 
\end{split}
\end{equation}
on the both sides. We get:

\begin{equation}
\begin{split}
\label{zm8}
&\int_{\mathbb{R}^d} \tilde{p}_{\mathbf{H}, \mathbf{\lambda}, \mathbf{f}, \mathbf{W}}(\overline{\mathbf{X}}^{\mathrm{re}}) p_{\varepsilon}(\mathbf{X}^{\mathrm{re}}-\overline{\mathbf{X}}^{\mathrm{re}}) \mathrm{d} \overline{\mathbf{X}}^{\mathrm{re}} 
\\&=\int_{\mathbb{R}^d} \tilde{p}_{\tilde{\mathbf{H}}, \tilde{\mathbf{\lambda}}, \tilde{\mathbf{f}}, \mathbf{W}}(\overline{\mathbf{X}}^{\mathrm{re}}) p_{\varepsilon}(\mathbf{X}^{\mathrm{re}}-\overline{\mathbf{X}}^{\mathrm{re}}) \mathrm{d} \overline{\mathbf{X}}^{\mathrm{re}}
\end{split}
\end{equation}

In Eq.\cref{zm8}, we use $*$ as the convolution operator. We get:

\begin{equation}
\begin{split}
\label{zm9}
\left(\tilde{p}_{\mathbf{H}, \mathbf{\lambda}, \mathbf{f}, \mathbf{W}} * p_{\varepsilon}\right)(\mathbf{X}^{\mathrm{re}})=\left(\tilde{p}_{\tilde{\mathbf{H}}, \tilde{\mathbf{\lambda}}, \tilde{\mathbf{f}}, \mathbf{W}} * p_{\varepsilon}\right)(\mathbf{X}^{\mathrm{re}})
\end{split}
\end{equation}

in Eq. \cref{zm9}, we used $F[$.$]$ to designate the Fourier transform, and where $\varphi_{\varepsilon}=F\left[p_{\varepsilon}\right]$.

\begin{equation}
\begin{split}
\label{zm10}
F\left[\tilde{p}_{\mathbf{H}, \mathbf{\lambda}, \mathbf{f}, \mathbf{W}}\right](\omega) \varphi_{\varepsilon}(\omega)=F\left[\tilde{p}_{\tilde{\mathbf{H}}, \tilde{\mathbf{\lambda}}, \tilde{\mathbf{f}}, \mathbf{W}}\right](\omega) \varphi_{\varepsilon}(\omega)
\end{split}
\end{equation}

In Eq. \cref{zm10}, we remove $\varphi_{\varepsilon}(\omega)$ from both sides, since it is non-zero almost everywhere (by assumption \hyperref[i]{(i)}).

\begin{equation}
\begin{split}
\label{zm11}
F\left[\tilde{p}_{\mathbf{H}, \mathbf{\lambda}, \mathbf{f}, \mathbf{W}}\right](\omega)=F\left[\tilde{p}_{\tilde{\mathbf{H}}, \tilde{\mathbf{\lambda}}, \tilde{\mathbf{f}}, \mathbf{W}}\right](\omega)
\end{split}
\end{equation}

According to the inverse transformation of the Fourier transform, we get:

\begin{equation}
\begin{split}
\label{zm12}
\tilde{p}_{\mathbf{H}, \mathbf{\lambda}, \mathbf{f}, \mathbf{W}}(\mathbf{X}^{\mathrm{re}})=\tilde{p}_{\tilde{\mathbf{H}}, \tilde{\mathbf{\lambda}}, \tilde{\mathbf{f}}, \mathbf{W}}(\mathbf{X}^{\mathrm{re}})
\end{split}
\end{equation}

\textbf{Step II:}
By taking the logarithm on both sides of equation Eq. \cref{zm12} and replacing $p_{H, \lambda}$ by its expression from Eq. \cref{zm3}, we get:

\begin{equation}
\begin{split}
\label{zm14}
& \log \operatorname{vol} J_{\mathbf{f}^{-1}}(\mathbf{X}^{\mathrm{re}}) +\sum_{i=1}^n
\\&  \left(\log Q_i\left(f_i^{-1}(\mathbf{X}^{\mathrm{re}})\right)-\log Z_i(\mathbf{W}) +\sum_{j=1}^k H_{i, j}\left(f_i^{-1}(\mathbf{X}^{\mathrm{re}})\right) \lambda_{i, j}(\mathbf{W})\right) \\
& = \log \operatorname{vol} J_{\hat{\mathbf{f}}^{-1}}(\mathbf{X}^{\mathrm{re}}) +\sum_{i=1}^n
\\& \left(\log \tilde{Q}_i\left(\tilde{f}_i^{-1}(\mathbf{X}^{\mathrm{re}})\right)-\log \tilde{Z}_i(\mathbf{W})  +\sum_{j=1}^k \tilde{H}_{i, j}\left(\tilde{f}_i^{-1}(\mathbf{X}^{\mathrm{re}})\right) \tilde{\lambda}_{i, j}(\mathbf{W})\right)
\end{split}
\end{equation}

Let $W_0, \ldots, W_{n k}$ be the points provided by assumption \hyperref[iv]{(iv)}, and define $\overline{\mathbf{\lambda}}(\mathbf{W})=\mathbf{\lambda}(\mathbf{W})-\mathbf{\lambda}\left(\mathbf{W}_0\right)$.
We substitute each $\mathbf{W}_l$ into Eq. \cref{zm14} to obtain $n k+1$ such equations. We subtract the first equation of $\mathbf{W}_0$ from the remaining $n k$ equations to obtain the following for $l=1, \ldots, n k$.

\begin{equation}
\begin{split}
\label{zm15}
&\left\langle\mathbf{H}\left(\mathbf{f}^{-1}(\mathbf{X}^{\mathrm{re}})\right), \overline{\mathbf{\lambda}}\left(\mathbf{W}_l\right)\right\rangle+\sum_i \log \frac{Z_i\left(\mathbf{W}_0\right)}{Z_i\left(\mathbf{W}_l\right)} 
\\& =\left\langle\tilde{\mathbf{H}}\left(\tilde{\mathbf{f}}^{-1}(\mathbf{X}^{\mathrm{re}})\right), \overline{\tilde{\lambda}}\left(\mathbf{W}_l\right)\right\rangle+\sum_i \log \frac{\tilde{Z}_i\left(\mathbf{W}_0\right)}{\tilde{Z}_i\left(\mathbf{W}_l\right)}
\end{split}
\end{equation}

Let $L$ denote the matrix defined in assumption \hyperref[iv]{(iv)}, and $\tilde{L}$ define $\tilde{\lambda}$ in a similar way ($\tilde{L}$ is not necessarily invertible).
Define $s_l=\sum_i \log \frac{\tilde{Z}_i\left(\mathbf{W}_0\right) Z_i\left(\mathbf{W}_l\right)}{Z_i\left(\mathbf{W}_0\right) \tilde{Z}_i\left(\mathbf{W}_l\right)}$, and let $\mathbf{s}$ be the vector of all $s_l$, where $l=1, \ldots, n k$.
Expressing Eq. \cref{zm15} as a matrix form for all points $\mathbf{W}_l$, we get:

\begin{equation}
\begin{split}
\label{zm16}
L^T \mathbf{H}\left(\mathbf{f}^{-1}(\mathbf{X}^{\mathrm{re}})\right)=\tilde{L}^T \tilde{\mathbf{H}}\left(\tilde{\mathbf{f}}^{-1}(\mathbf{X}^{\mathrm{re}})\right)+\mathbf{s}
\end{split}
\end{equation}

Then we multiply both sides of the above equations by $L^{-T}$:

\begin{equation}
\begin{split}
\label{zm17}
\mathbf{H}\left(\mathbf{f}^{-1}(\mathbf{X}^{\mathrm{re}})\right)=M \tilde{\mathbf{H}}\left(\tilde{\mathbf{f}}^{-1}(\mathbf{X}^{\mathrm{re}})\right)+\mathbf{q}
\end{split}
\end{equation}
where $M=L^{-T} \tilde{L}^{T}$ and $\mathbf{q}=L^{-T} \mathbf{s}$.


To prove that M is invertible, we first introduce Lemma \hyperref[Lemma 1:]{1}:

\begin{lemma}
\label{Lemma 1:}
Consider a strong exponential distribution of size $k \geq 2$ with sufficient statistics $\mathbf{H}(\mathbf{X}^{\mathrm{re}})=\left(H_1(\mathbf{X}^{\mathrm{re}}), \ldots, H_k(\mathbf{X}^{\mathrm{re}})\right)$.
Suppose further that $\mathbf{H}$ is differentiable almost everywhere. Then there exist $k$ distinct values ${{X}_1}^{\mathrm{re}}$ through ${{X}_k}^{\mathrm{re}}$ such that $\left(\mathbf{H}^{\prime}\left({{X}_1}^{\mathrm{re}}\right), \ldots, \mathbf{H}^{\prime}\left({{X}_k}^{\mathrm{re}}\right)\right) $ is linearly independent in $\mathbb{R}^k$.

\end{lemma}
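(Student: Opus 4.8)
}
The plan is to argue by contradiction, showing that if no such collection of $k$ distinct points exists then the derivatives $\mathbf{H}'(\mathbf{X}^{\mathrm{re}})$ are confined to a proper linear subspace of $\mathbb{R}^k$ for (almost) every $\mathbf{X}^{\mathrm{re}}$, which in turn forces a nontrivial linear relation among the $H_j$ themselves and contradicts the linear independence/uncorrelatedness built into the ``strong exponential family'' hypothesis (the same condition used in assumption~\hyperref[iii]{(iii)}). First I would set $V = \operatorname{span}\{\mathbf{H}'(\mathbf{X}^{\mathrm{re}}) : \mathbf{X}^{\mathrm{re}} \in \mathcal{X},\ \mathbf{H}\text{ differentiable at }\mathbf{X}^{\mathrm{re}}\} \subseteq \mathbb{R}^k$ and observe that $\dim V$ equals the maximal number of points at which the derivatives are linearly independent; so it suffices to prove $\dim V = k$.

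Suppose instead $\dim V = r < k$. Then there is a nonzero vector $\mathbf{b} \in \mathbb{R}^k$ orthogonal to $V$, so $\langle \mathbf{b}, \mathbf{H}'(\mathbf{X}^{\mathrm{re}})\rangle = 0$ wherever $\mathbf{H}$ is differentiable, i.e. almost everywhere. Since $\mathbf{H}$ is differentiable almost everywhere and (being a sufficient statistic of an exponential family) is absolutely continuous on the connected components of $\mathcal{X}$, integrating this relation shows that $\langle \mathbf{b}, \mathbf{H}(\mathbf{X}^{\mathrm{re}})\rangle$ is constant on each component of $\mathcal{X}$. This says the functions $(H_j)_{1\le j\le k}$ satisfy an affine dependence on a set of positive measure, contradicting the assumption that the sufficient statistics of a strong exponential family are linearly independent (more precisely, linearly uncorrelated together with the constant function, so that no nontrivial affine combination vanishes on a positive-measure set). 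Hence $\dim V = k$, and we can pick $k$ distinct points ${X_1}^{\mathrm{re}},\ldots,{X_k}^{\mathrm{re}}$ realizing a basis of $\mathbb{R}^k$ among the $\mathbf{H}'({X_l}^{\mathrm{re}})$.

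The main obstacle I anticipate is the regularity bookkeeping needed to pass from ``$\langle \mathbf{b}, \mathbf{H}'\rangle = 0$ almost everywhere'' to ``$\langle \mathbf{b}, \mathbf{H}\rangle$ is locally constant'': almost-everywhere differentiability alone does not imply this (a Cantor-type function is a counterexample), so one must invoke the stronger structural regularity that comes with $\mathbf{H}$ being the sufficient statistic of a (strongly) exponential family — absolute continuity, or at least that $\mathbf{H}$ lies in a Sobolev class on which the fundamental theorem of calculus applies. A secondary subtlety is handling a domain $\mathcal{X}$ that may be disconnected, where ``constant'' must be read component-wise; this is harmless for the downstream use of the lemma since only the existence of $k$ points with independent derivatives is needed. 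Once these points are fixed, the lemma feeds directly into showing $M = L^{-T}\tilde L^{T}$ in Eq.~\cref{zm17} is invertible, completing the third step of the proof of Theorem~\ref{Theorem 2}.
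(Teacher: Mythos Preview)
Your proposal is correct and follows essentially the same route as the paper: argue by contradiction, observe that the derivatives $\mathbf{H}'(\mathbf{X}^{\mathrm{re}})$ lie in a proper subspace, pick a nonzero orthogonal vector, integrate to obtain a nontrivial affine relation among the $H_j$, and contradict the strong-exponential assumption. Your version is in fact more careful than the paper's, which simply writes ``by integration'' without addressing the absolute-continuity/Sobolev point you flag; that regularity caveat is a genuine gap in the paper's presentation that you have correctly identified rather than introduced.
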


\begin{proof}[Proof of Lemma~\ref{Lemma 1:}]
Suppose that for any chosen $k$ points, the family $\left(\mathbf{H}^{\prime}\left({{X}_1}^{\mathrm{re}}\right), \ldots, \mathbf{H}^{\prime}\left({{X}_k}^{\mathrm{re}}\right)\right)$ is never linearly independent.
This means that $\mathbf{H}^{\prime}(\mathbb{R})$ is contained in a subspace of $\mathbb{R}^k$ of dimension at most $k-1$.
Let $\mathbf{\theta}$ be a nonzero vector orthogonal to $\mathbf{H}^{\prime}(\mathbb{R})$.
Then for all $\mathbf{X}^{\mathrm{re}} \in \mathbb{R}$, we have $\left\langle\mathbf{H}^{\prime}(\mathbf{X}^{\mathrm{re}}), \mathbf{\theta}\right\rangle=0$.
We find $\langle \mathbf{H}(\mathbf{X}^{\mathrm{re}}), \mathbf{\theta} \rangle=$ constant by integration.
Since this holds for all $\mathbf{X}^{\mathrm{re}} \in \mathbb{R}$ and $\mathbf{\theta} \neq 0$, we conclude that the distribution is not strongly exponential, which contradicts Assumption \ref{assumption 1}.

\end{proof}

\textbf{Step III:}
Now, by definition of $\mathbf{H}$ and assumption \hyperref[iii]{(iii)}, its Jacobian matrix exists and is a matrix of $n k \times n$ with rank $n$. This means that the Jacobian matrix of $\tilde{\mathbf{H}} \circ \tilde{\mathbf{f}}^{-1}$ exists and has a rank of $n$, and so does $M$. We distinguish two cases: 
(1) If $k=1$, then this means that $M$ is invertible (because $M$ is $n \times n$).
(2) If $k>1$, we define 
$\mathbf{H}_i\left(\overline{{X}_i}^{\mathrm{re}}\right)=\left(H_{i, 1}\left(\overline{{X}_i}^{\mathrm{re}}\right), \ldots H_{i, k}\left(\overline{{X}_i}^{\mathrm{re }}\right)\right)$ and $\overline{\mathbf{X}}^{\mathrm{re}}=\mathbf{f}^{-1}(\mathbf{X}^{\mathrm{re}})$. 
According to Lemma \hyperref[Lemma 1:]{1}, for each $i \in[1, \ldots, n]$, there exist $k$ points $\overline{{X}_i}^{\mathrm{re1}}, \ldots, \overline{{X}_i}^{\mathrm{re}k}$ such that $\left(\mathbf{H}_i^{\prime}\left(\overline{{X}_i}^{\mathrm{re1}}\right), \ldots, \mathbf{H}_i^{\prime}\left(\overline{{X}_i}^{\mathrm{re}k}\right)\right)$ linearly independent.
Aggregate these points into $k$ vectors $\left(\overline{\mathbf{X}}^{\mathrm{re1}}, \ldots, \overline{\mathbf{X}}^{\mathrm{re}k}\right)$, and splice the computed $k$ Jacobian matrices $J_{\mathbf{H}}\left( \overline{\mathbf{X}}^{\mathrm{re}l}\right)$ is horizontally spliced into a matrix $Q=\left(J_{\mathbf{H}}\left(\overline{\mathbf{X}}^{\mathrm{re1}}\right), \ldots, J_{\mathbf{H}}\left(\overline{\mathbf{X}}^{\mathrm{re}k}\right)\right)$ (also define $\tilde{Q}$ as $\tilde{\mathbf{H}}(\tilde{\mathbf{f}}^{-1} \circ \mathbf{f}(\overline{\mathbf{X}}^{\mathrm{re}})$) of the Jacobi matrix splice).
Then the matrix $Q$ is invertible (by Lemma \hyperref[Lemma 1:]{1} and the combination of the fact that each component of $\tilde{H}$ is univariate).
By deriving Eq. \cref{zm17} for each $\mathbf{\mathbf{X}^{\mathrm{re}}}^l$, we get (in matrix form):

\begin{equation}
\begin{split}
Q=M \tilde{Q}
\end{split}
\end{equation}

The reversibility of $Q$ implies the reversibility of $M$ and $\tilde{Q}$.

Thus, Eq. \cref{zm17} and the invertibility of $M$ implies that $(\tilde{\mathbf{f}}, \tilde{\mathbf{H}}, \tilde{\mathbf{\lambda}}) \sim(\mathbf{f}, \mathbf{H}, \mathbf{\lambda})$, thus completing the proof.

\end{proof}

\section{IDENTIFICATION OF LATENT CONFOUNDER USING IVAE}
\label{appendix:C}

\subsection{The Detail of Obtaining the Approximate Posterior of Latent Confounders.}
\label{appendix:C2}
  
Following the standard iVAE \cite{9}, we use $q_\phi(\mathbf{C} \mid \mathbf{A}, \mathbf{W})$ as the approximate posterior to learn $p_\theta(\mathbf{C} \mid \mathbf{A}, \mathbf{W})$.

\begin{small} 
\begin{equation}
\label{eq:13}
\begin{split}
& E\left[\log p_\theta\left(\mathbf{A} \mid \mathbf{W}\right)\right] \geq \mathcal{L}(\theta, \phi) \\
= & E[\underbrace{E_{q_\phi\left(\mathbf{C} \mid \mathbf{A}, \mathbf{W}\right)}\left[\log p_\theta\left(\mathbf{C} \mid \mathbf{W}\right)-\log q_\phi\left(\mathbf{C} \mid \mathbf{A}, \mathbf{W}\right)\right]}_I 
\\& +\underbrace{E_{q_\phi\left(\mathbf{C} \mid \mathbf{A}, \mathbf{W}\right)}\left[\log p_\theta\left(\mathbf{A} \mid \mathbf{C}\right)\right]}_{I I}]
\end{split}
\end{equation}
\end{small}

We can further decompose $\log p_\theta\left(\mathbf{A}, \mathbf{C} \mid \mathbf{W}\right)$ into the following form.

\begin{equation}
\begin{split}
&\log p_\theta\left(\mathbf{A}, \mathbf{C} \mid \mathbf{W}\right)  
\\& =\log p_\theta\left(\mathbf{A} \mid \mathbf{C}, \mathbf{W}\right)+\log p_\theta\left(\mathbf{C} \mid \mathbf{W}\right) 
\\& =\log p_\theta\left(\mathbf{A} \mid \mathbf{C}\right)+\log p_\theta\left(\mathbf{C} \mid \mathbf{W}\right) 
\end{split}
\end{equation}

We employ the Gaussian distribution as the prior distribution for the representations, defined as:

\begin{equation}
\begin{split}
& p_\theta\left(\mathbf{C} \mid \mathbf{W}\right) :=N\left(\mu_w\left(\mathbf{W}\right), v_w\left(\mathbf{W}\right)\right), 
\\&
q_\phi\left(\mathbf{C} \mid \mathbf{A}, \mathbf{W}\right)
:= N\left(\mu_{a w}\left(\mathbf{A}, \mathbf{W}\right), v_{a w}\left(\mathbf{A}, \mathbf{W}\right)\right)
\end{split}
\end{equation}
where $N(\cdot, \cdot)$ the Gaussian distribution. 
Thus, the computation of the expectation \(I\) from equation Eq.~\cref{eq:13} can be simplified to the following equation:

\begin{equation}
\begin{aligned}
& E_{q_\phi\left(\mathbf{C} \mid \mathbf{A}, \mathbf{W}\right)}\left[\log p_\theta\left(\mathbf{C} \mid \mathbf{W}\right) - \log q_\phi\left(\mathbf{C} \mid \mathbf{A}, \mathbf{W}\right)\right] \\
&= -KL\left(N\left(\mu_{aw}\left(\mathbf{A}, \mathbf{W}\right), v_{aw}\left(\mathbf{A}, \mathbf{W}\right)\right), \right.  \left. N\left(\mu_w\left(\mathbf{W}\right), v_w\left(\mathbf{W}\right)\right)\right)
\end{aligned}
\end{equation}

For computing the  $II$ part in Eq.\cref{eq:13}, we have  the following equation:

\begin{equation}
\begin{split}
&\log p_\theta\left(\mathbf{A} \mid \mathbf{C}\right)
\\& =\sum_{i=1}^n {a}_{u i} \log \left(\mu_c\left(\mathbf{C}\right)_i\right)+  \left(1-{a}_{u i}\right) \log \left(1-\mu_c\left(\mathbf{C}\right)_i\right)
\end{split}
\end{equation}

Thus, by maximizing Eq.\cref{eq:13}, we can derive an approximate posterior $q_\phi({\mathbf{C}} \mid \mathbf{A}, \mathbf{W})$ for recovering latent confounder $\mathbf{C}$.

\section{EXPERIMENTAL DETAILS.}
\label{appendix:D}

\begin{figure*}

    \centering
    \begin{minipage}[a]{0.45\linewidth}
        \centering
        \includegraphics[width=\linewidth]{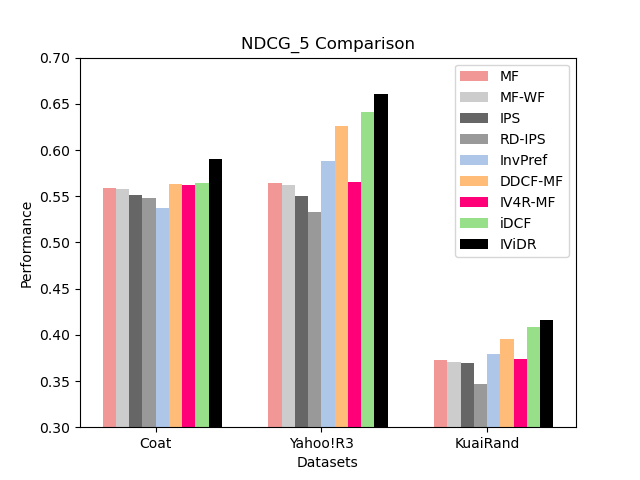}
         \captionsetup{labelformat=empty} 
        \vspace{0cm}  
    \end{minipage}
    \begin{minipage}[a]{0.45\linewidth}
        \centering
        \includegraphics[width=\linewidth]{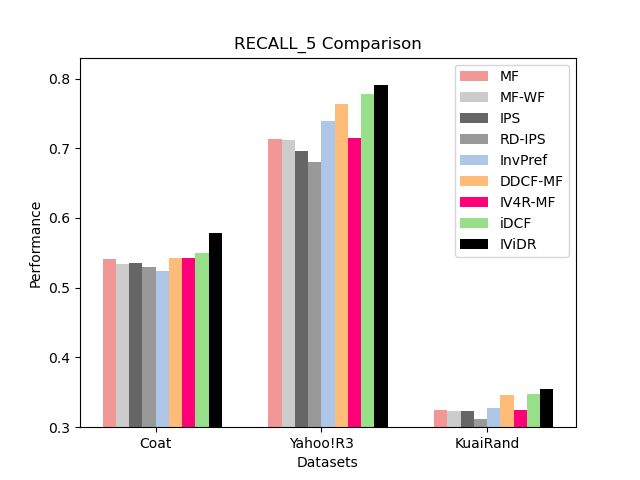}
         \captionsetup{labelformat=empty} 
    \end{minipage}
    
    \caption{The performance of all methods on the three real-world datasets.}
    \label{fig:4}
    \vspace{-10pt}
\end{figure*}

\subsection{Dataset description.}
\label{appendix:D1}

\noindent\textbf{Coat Dataset.} 
The Coat dataset \cite{23} simulates MNAR data from online coat purchases.
It provides user features and ratings (5-point scale, where 1 indicates the lowest rating).

\noindent\textbf{Yahoo!R3 Dataset.} 
The Yahoo!R3 dataset \cite{yahoo} contains user-item interactions from Yahoo!'s recommendation system.
It also provides user features and ratings (5-point scale, where 1 indicates the lowest rating).

\noindent\textbf{KuaiRand Dataset.} 
The KuaiRand dataset \cite{gao2022kuairand} includes 23,533 users and 6,712 videos.
It also provides user features and signals for whether the user clicked (IsClick=1 denoting a click).

\section{Data Generation Process}
\label{appendix:F}

The simulated dataset includes 10,000 users and 1000 items. 
For each user $u$, $\mathbf{C}$ is represented as a 2D vector.
This vector reflects the user’s socio-economic status and is drawn from a mixture of five independent multivariate Gaussian distributions.
For each item $i$, $\mathbf{V}$ is represented as a 2D vector reflecting the mixing of item factors, drawn from a mixture of five independent multivariate Gaussian distributions. 
The proxy variables $\mathbf{W}$ and $\mathbf{M}$ are one-dimensional categorical variables with probabilities determined by a Gaussian-like distribution.
The embeddings of user features $\mathbf{Z}$ are randomly distributed.
The conditional distribution of $\mathbf{C}$ follows:

\begin{equation}
\mathbf{C}^k \mid \mathbf{W} \sim N\left(\mu_k\left(\mathbf{W}\right), \sigma_k^2\left(\mathbf{W}\right)\right), k \in\{1,2\},
\end{equation}
where $\mathbf{C}{ }^k$ is the j-th element of $\mathbf{C}$.

The conditional distribution of $\mathbf{V}$ follows:

\begin{equation}
\mathbf{V}^k \mid \mathbf{M} \sim N\left(\mu_k\left(\mathbf{M}\right), \sigma_k^2\left(\mathbf{M}\right)\right), k \in\{1,2\},
\end{equation}
where $\mathbf{V}{ }^k$ is the j-th element of $\mathbf{V}$.

We can reconstruct $\mathbf{C}$ with $\mathbf{V}$ to get the reconstructed latent confounders $\mathbf{H}$.

The $(u, i)$ is generated by

\begin{equation}
\begin{aligned}
& a_{u i} \sim \operatorname{Bernoulli}\left(g_i\left(\mathbf{H}\right)\right), \\
& g_i(\mathbf{H})=\alpha \cdot \operatorname{sigmoid}\left(\text { LeakyRelu }\left(\mathbf{H} M e_{\mathbf{H} i}\right)+\gamma \epsilon\right)
\end{aligned}
\end{equation}
where $M$ is a 2 × 2 matrix, $e_{\mathbf{H} i}$ is a randomly generated item-wise 2D embedding vector, $\alpha$ is a hyperparameter, $\epsilon$ is random noise, and $\gamma$ is the corresponding weight of the noise.

The $r_{u i}=f_n\left(e_u^T e_i+\beta \mathbf{H}^T e_{H i}+\right.$ $\epsilon_{u i})$, where $f_n: \mathbb{R} \rightarrow\{1,2,3,4,5\}$ is a normalization function, $\epsilon_{u i}$ is an i.i.d. random noise, and $\beta$ is a hyperparameter.

\section{Ablation Study and Hyper-parameter Analysis}
\label{appendix:E}

\begin{figure*}

    \centering
    \begin{minipage}[a]{0.45\linewidth}
        \centering
        \includegraphics[width=\linewidth]{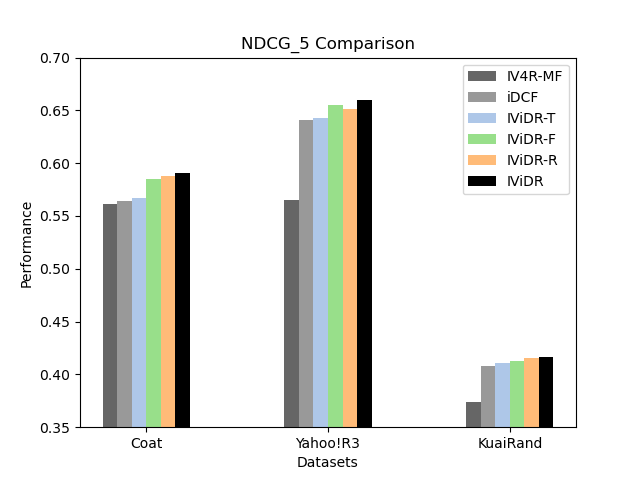}
         \captionsetup{labelformat=empty} 
        \vspace{0cm}  
    \end{minipage}
    \begin{minipage}[a]{0.45\linewidth}
        \centering
        \includegraphics[width=\linewidth]{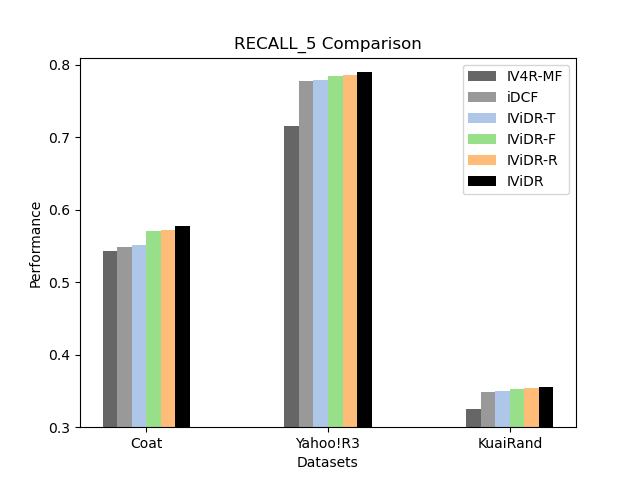}
         \captionsetup{labelformat=empty} 
    \end{minipage}
    
    \caption{IViDR-T, IViDR-F, IViDR-R, and IViDR algorithms' recommendation performance on the Coat, Yahoo!R3, and KuaiRand datasets.}
    \label{fig:6}   
    \vspace{-10pt}
\end{figure*}

\begin{figure}[H]
    \centering
    \begin{minipage}[a]{0.46\linewidth}
        \centering
        \includegraphics[width=\linewidth]{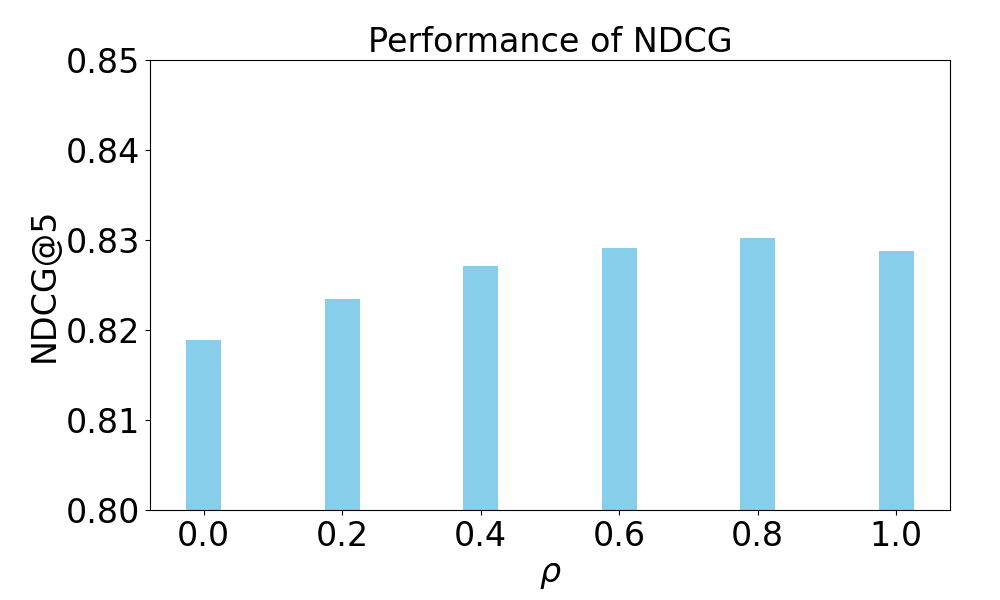}
         \captionsetup{labelformat=empty} 
        \caption*{(a) Effect of the $\rho$ selection. We
        show the results of NDCG@5 on the simulated datasets.}
        \vspace{0cm}  
    \end{minipage}
    \hfill
    \begin{minipage}[a]{0.46\linewidth}
        \centering
        \includegraphics[width=\linewidth]{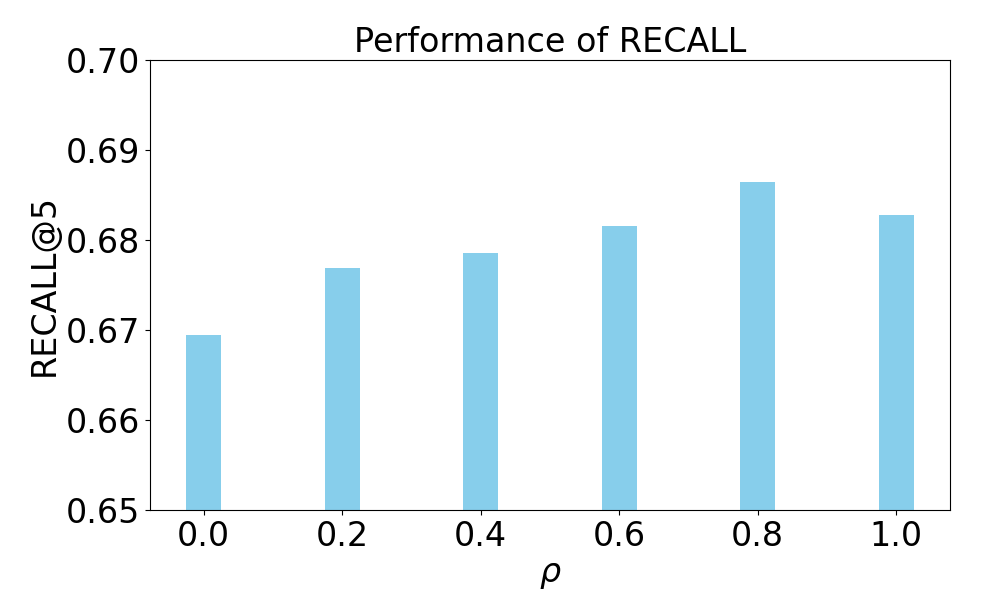}
         \captionsetup{labelformat=empty} 
        \caption*{(b) Effect of the $\rho$ selection. We
        show the results of RECALL@5 on the simulated datasets.}
    \end{minipage}
    \hfill
    \begin{minipage}[b]{0.46\linewidth}
        \centering
        \includegraphics[width=\linewidth]{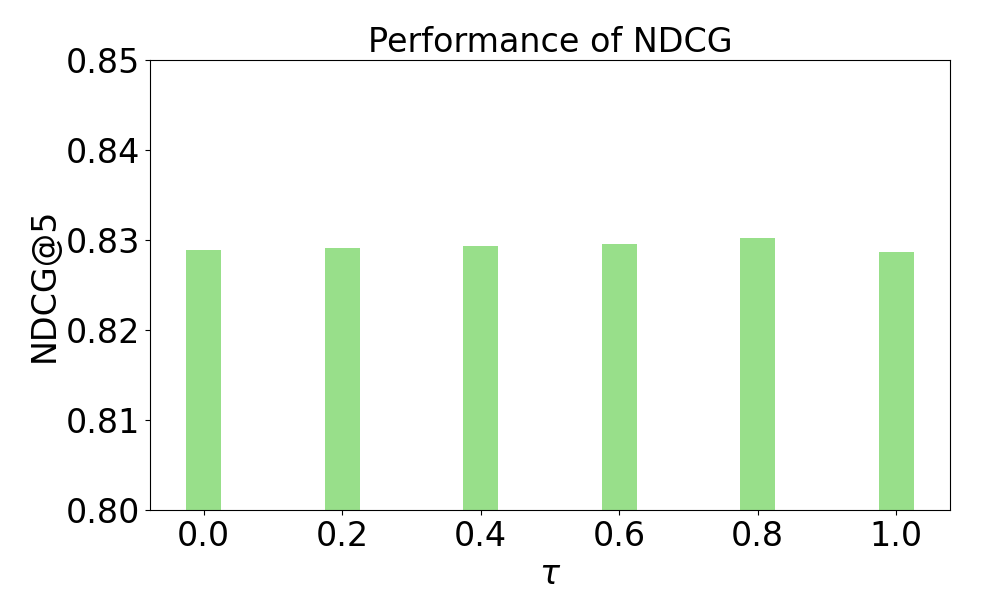}
         \captionsetup{labelformat=empty} 
        \caption*{(c) Effect of the $\tau$ selection. We
        show the results of NDCG@5 on the simulated datasets.}
        \vspace{0cm}
    \end{minipage}
    \hfill
    \begin{minipage}[b]{0.46\linewidth}
        \centering
        \includegraphics[width=\linewidth]{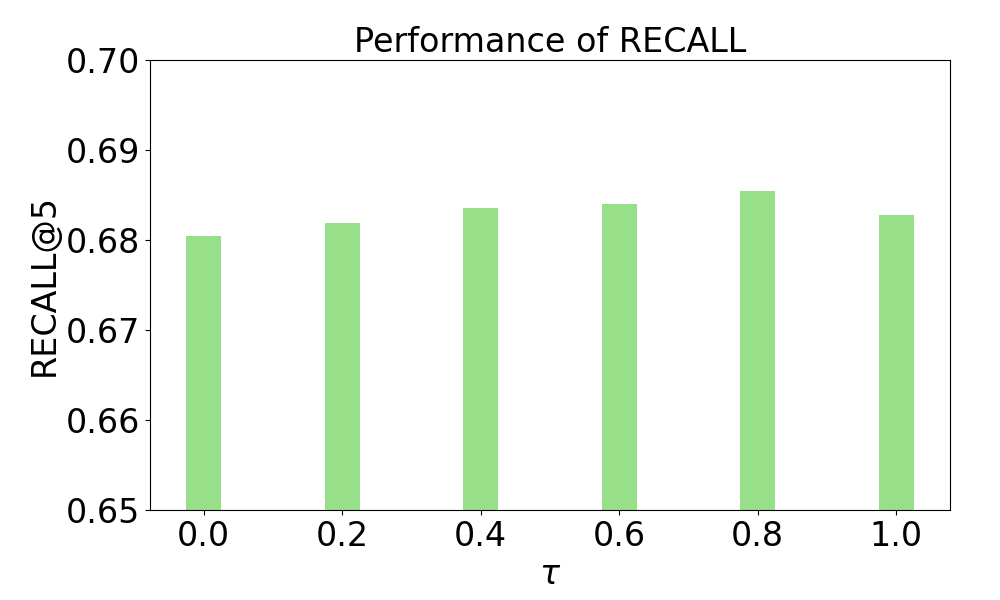}
         \captionsetup{labelformat=empty} 
        \caption*{(d) Effect of the $\tau$ selection. We
        show the results of RECALL@5 on the simulated datasets.}
    \end{minipage}
    \hfill
    \caption{ Effect of the hyper-parameter selection.}
    \label{fighp}
\end{figure}

\subsection{Ablation Study}
\label{appendix:E1}

We conducted an ablation study to show the effectiveness of the proposed architecture.
(1) IV4R-MF: IV4Rec applies IVs to decompose input vectors within recommendation models to mitigate the effects of latent confounders. IV4R-MF uses MF as the backbone.
(2) iDCF: The iDCF incorporates proximal causal inference techniques to identifiably predict users' counterfactual feedback on items. It predicts more accurately than Deconfounder.
(3) IViDR-T: The case where the original treatment $\mathbf{T}$ is added to the interactional data $\mathbf{X}$. 
(4) IViDR-F: The case where only fitted part $\widehat{\mathbf{T}}$ is added to the interactional data $\mathbf{X}$.  
(5) IViDR-R: The case where only residual part $\widetilde{\mathbf{T}}$ is added to the interactional data $\mathbf{X}$.

The experimental results are shown in Table \ref{tab:as}:  
With the addition of the original treatment $\mathbf{T}$, IViDR-T slightly improved the iDCF results on all three real datasets, but the difference in results was small.

With the addition of the fitted part $\widehat{\mathbf{T}}$, IViDR-F outperforms the iDCF results on all three real datasets.
Specifically, for the Coat dataset, NDCG@5 improved from 0.5638 to 0.5846, and RECALL@5 improved from 0.5493 to 0.5710. 
For the Yahoo!R3 dataset, NDCG@5 improved from 0.6410 to 0.6552, and RECALL@5 improved from 0.7780 to 0.7849. For the KuaiRand dataset, NDCG@5 improved from 0.4080 to 0.4130, and RECALL@5 improved from 0.3481 to 0.3529.

With the addition of the residual part $\widetilde{\mathbf{T}}$, IViDR outperforms IViDR-F on all three real datasets.
Specifically, for the Coat dataset, NDCG@5 improved from 0.5846 to 0.5903, and RECALL@5 improved from 0.5710 to 0.5783. 
For the Yahoo!R3 dataset, NDCG@5 improved from 0.6552 to 0.6602, and RECALL@5 improved from 0.7849 to 0.7901. For the KuaiRand dataset, NDCG@5 improved from 0.4130 to 0.4161, and RECALL@5 improved from 0.3529 to 0.3549.

The recommendation performance of the IViDR-T, IViDR-F, IViDR-R, and IViDR algorithms on the Coat, Yahoo!R3, and KuaiRand datasets is shown in Figure \ref{fig:6}.

\subsection{Hyper-parameter Analysis}
\label{appendix:E2}

We perform experiments to analyze the effect of the key hyperparameters of our method on the recommended performance.

Particularly, we fix all the other parameters (fixed $\alpha$=0.1, $\beta$=2.0, $\gamma$=0.0), then set the $\rho$ and $\tau$ in Eq. \cref{eq:15} from 0 to 1 with an interval of 0.2. 
As shown in Figure \ref{fighp}, we can see that:
(1) As the $\rho$ increases, NDCG@5 and RECALL@5 first get better and then decrease.
This means that appropriate weighting of the debiased interactional data $\mathbf{X}^{\mathrm{re}}$ can improve model recommendation performance.
(2) As the $\tau$ increases, NDCG@5 and RECALL@5 do not change much, but the general trend is still upward and then downward.
This means that appropriate weighting of the interactional data $\mathbf{X}$ can also improve model recommendation performance.

\end{document}